\begin{document}
\bibliographystyle{plain}
\newtheorem{theo}{Theorem}[section]
\newtheorem{lemme}[theo]{Lemma}
\newtheorem{cor}[theo]{Corollary}
\newtheorem{defi}[theo]{Definition}
\newtheorem{prop}{Proposition}
\newtheorem{problem}[theo]{Problem}
\newtheorem{remarque}[theo]{Remark}
\newtheorem{claim}[theo]{Claim}
\newcommand{\beq}{\begin{eqnarray}}
\newcommand{\enq}{\end{eqnarray}}
\newcommand{\be}{\begin{eqnarray*}}
\newcommand{\en}{\end{eqnarray*}}
\newcommand{\ben}{\begin{eqnarray*}}
\newcommand{\enn}{\end{eqnarray*}}
\newcommand{\Td}{\mathbb T^d}
\newcommand{\Rd}{\mathbb R^n}
\newcommand{\R}{\mathbb R}
\newcommand{\N}{\mathbb N}
\newcommand{\Sn}{\mathbb S}
\newcommand{\Zd}{\mathbb Z^d}
\newcommand{\Linf}{L^{\infty}}
\newcommand{\dt}{\partial_t}
\newcommand{\Dt}{\frac{d}{dt}}
\newcommand{\Dtt}{\frac{d^2}{dt^2}}
\newcommand{\demi}{\frac{1}{2}}
\newcommand{\vf}{\varphi}
\newcommand{\epu}{_{\epsilon}}
\newcommand{\ep}{^{\epsilon}}
\newcommand{\bfi}{{\mathbf \Phi}}
\newcommand{\bpsi}{{\mathbf \Psi}}
\newcommand{\bx}{{\mathbf x}}
\newcommand{\bX}{{\mathbf X}}
\newcommand{\ds}{\displaystyle}
\newcommand {\g}{\`}
\newcommand{\E}{\mathbb E}
\newcommand{\Q}{\mathbb Q}
\newcommand{\PP}{\mathbb P}
\newcommand{\1}{\mathbb I}
\let\cal=\mathcal

\title{Performance analysis of the optimal strategy under partial information}
\maketitle
\begin{center}
Ahmed Bel Hadj Ayed\footnotemark[1]$^{,}$\footnotemark[2], Gr\'egoire Loeper \footnotemark[2], Sofiene El Aoud \footnotemark[1],  Fr\'ed\'eric Abergel \footnotemark[1] 
\end{center}
\begin{abstract}
The question addressed in this paper is the performance of the optimal strategy, and the impact of partial information. The setting we consider is that of a stochastic asset price model where the trend follows an unobservable Ornstein-Uhlenbeck process. We focus on the optimal strategy with a logarithmic utility function under full or partial information. For both cases, we provide the asymptotic expectation and variance of the logarithmic return as functions of the signal-to-noise ratio and of the trend mean reversion speed. Finally, we compare the asymptotic Sharpe ratios of these strategies in order to quantify the loss of performance due to partial information. 
\end{abstract}
\footnotetext[1]{Chaire of quantitative finance, laboratory MAS, CentraleSup\'elec}
\footnotetext[2]{BNP Paribas Global Markets}

\section*{Introduction}
Optimal investment was introduced by Merton in 1969 (see \cite{Merton} for details). He assumed that the risky asset follows a geometric Brownian motion and derived the optimal investment rules for an investor maximizing his expected utility function. Several generalisations of this problem are possible. One of them is to consider a stochastic unobservable trend, which leads to a system with partial information. This hypothesis seems to be realistic since only the historical prices of the risky asset are available to the public. 
For example,  Karatzas and Zhao (see \cite{Karatzas}) study the case of an unobservable constant trend, Lakner (see \cite{LaknerStrat}) and Brendle (see \cite{Brendle}) consider a stochastic asset price model where the trend is an unobservable Ornstein Uhlenbeck process, and Sass and Hausmann (see \cite{Sass}) suppose that the trend is given by an unobserved continuous time, finite state Markov chain.

In this paper, we consider a stochastic asset price model where the trend is an unobservable Ornstein Uhlenbeck process and we focus on the optimal strategy with a logarithmic utility function  under partial or complete information.

The purpose of this work is to characterize the performance of these strategies as functions of the signal-to-noise ratio and of the trend mean reversion speed and to quantify the loss of performance due to partial information. The loss of utility due to incomplete information was already studied by 
Karatzas and Zhao (see \cite{Karatzas}), by  Brendle (see \cite{Brendle}) and by Rieder and B\"auerle (see  \cite{rieder}). Here, the trading strategy performance is measured with the asymptotic Sharpe ratio (see \cite{Sharpe} for details).

The paper is organized as follows: the first section presents the model and recalls some results from filtering theory.

In the second section, the optimal strategy with complete information is investigated. This portfolio is built by an agent who is able to observe the trend and aims to maximize his expected logarithmic utility. We provide, in closed form, the expectation and variance of the logarithmic return as functions of the signal-to-noise ratio. We also show that the asymptotic Sharpe ratio of the  optimal strategy with complete information is an increasing function of the signal-to-noise ratio.

In the third section, we consider the optimal strategy under partial information. This corresponds to an unobservable trend process and to an agent who aims to maximize his expected logarithmic utility. In this case, we provide, in closed form, the expectation and variance of the logarithmic return as functions of the signal-to-noise ratio and of the trend mean reversion  speed. Then, we derive the asymptotic Sharpe ratio and we show that this is an increasing function of the signal-to-noise ratio and an unimodal (increasing then decreasing) function of the trend mean reversion speed. After that, we introduce the partial information factor which is the ratio between the asymptotic Sharpe ratio of the optimal strategy with partial information and the asymptotic Sharpe ratio of the optimal strategy with full information. This factor measures the loss of performance due to partial information. We show that this factor is bounded by a threshold equal to $\frac{2}{3^{3/2}}$. 

In the fourth section, numerical examples illustrate the analytical results of the previous sections. The simulations show that, even with a high signal-to-noise ratio, a high trend mean reversion speed leads to a negligible performance of the optimal strategy under partial information compared to the case with complete information.


\section{Setup}
This section begins by presenting the model, which corresponds to an unobserved mean-reverting diffusion. After that, we reformulate this model in a completely observable environment (see \cite{Lipster1} for details). This setting introduces the conditional expectation of the trend, knowing the past observations. Then, we recall the asymptotic continuous time limit of the Kalman filter.
\subsection{The model}
Consider a financial market living on a stochastic basis $( \Omega , \mathcal{F} ,\mathbf{F}, \mathbb{P} )$, where  $\mathbf{F}=\left\lbrace  \mathcal{F}_{t}, t \geqslant 0 \right\rbrace$ is the natural filtration associated to a two-dimensional (uncorrelated) Wiener process $(W^{S},W^{\mu})$, and $\mathbb{P}$ is the objective probability measure. The dynamics of the risky asset $S$ is given by
\begin{eqnarray}\label{Model}
\frac{dS_{t}}{S_{t}}&=&\mu_{t}dt+\sigma_{S} dW_{t}^{S},\\
d\mu_{t}&=&-\lambda\mu_{t}dt+\sigma_{\mu}dW_{t}^{\mu}\label{Model2},
\end{eqnarray}
with $\mu_{0}=0$. We also assume that $\left( \lambda,\sigma_{\mu},\sigma_{S}\right) \in  \mathbb{R}_+^{*}\times\mathbb{R}_+^{*}\times\mathbb{R}_+^{*}$. The parameter $\lambda$ is called the trend mean reversion speed. Indeed, $\lambda$ can be seen as the "force" that pulls the trend back to zero. 
Denote by $\mathbf{F}^{S}=\left\lbrace  \mathcal{F}_{t}^S \right\rbrace$ be the natural filtration associated to the price process $S$. An important point is that only $\mathbf{F}^{S}$-adapted processes are observable, which implies that agents in this market do not observe the trend $\mu$.

\subsection{The observable framework}
As stated above, the agents can only observe the stock price process $S$. Since, the trend $\mu$ is not $F^S$-measurable, the agents do not observe it directly. Indeed, the model \hyperref[Model]{(1)-(2)} corresponds to a system with partial information.  The following proposition gives a representation of the model \hyperref[Model]{(1)-(2)} in an observable framework (see \cite{Lipster1} for details or Appendix \hyperref[sec::ProofObservableFramework]{A} for a proof). 
\begin{prop}\label{ObservableFramework}
The dynamics of the risky asset $S$ is also given by
\begin{eqnarray}\label{ObservableSDE}
\frac{dS_t}{S_t} & = & E\left[  \mu_t | \mathcal{F}^S_t\right] dt + \sigma_{S} dN_t,
\end{eqnarray}
where $N$ is a $\left( \mathbb{P} ,\mathbf{F}^{S} \right)$ Wiener process.
\end{prop}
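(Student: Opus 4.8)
The plan is to use the innovations approach from filtering theory. Write $\hat\mu_t := E\left[\mu_t \mid \mathcal{F}^S_t\right]$ for the filter estimate of the trend, and define the candidate process
\[
N_t := \frac{1}{\sigma_S}\left( \int_0^t \frac{dS_u}{S_u} - \int_0^t \hat\mu_u\, du \right).
\]
By construction this rearranges into the asserted dynamics $\frac{dS_t}{S_t} = \hat\mu_t\, dt + \sigma_S\, dN_t$, so the entire content of the proposition reduces to showing that $N$ is a $\left(\mathbb{P},\mathbf{F}^S\right)$ Wiener process. I would establish this via L\'evy's characterization: it suffices to prove that $N$ is a continuous $\mathbf{F}^S$-martingale started at $0$ with quadratic variation $\langle N\rangle_t = t$.

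Continuity, the initial value $N_0=0$, and $\mathbf{F}^S$-adaptedness are immediate, since $N$ is built from the observable return process and from $\hat\mu$, which is $\mathbf{F}^S$-adapted by definition. For the martingale property I would substitute the original dynamics \eqref{Model}: for $s<t$,
\[
\sigma_S\,(N_t - N_s) = \int_s^t (\mu_u - \hat\mu_u)\, du + \sigma_S\,(W^S_t - W^S_s).
\]
Conditioning on $\mathcal{F}^S_s$ and applying Fubini, the drift term contributes $\int_s^t E\left[\mu_u - \hat\mu_u \mid \mathcal{F}^S_s\right]du$; the key identity is that for $u>s$ the tower property gives $E\left[\hat\mu_u \mid \mathcal{F}^S_s\right] = E\left[E\left[\mu_u \mid \mathcal{F}^S_u\right]\mid \mathcal{F}^S_s\right] = E\left[\mu_u \mid \mathcal{F}^S_s\right]$, so the integrand vanishes identically. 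The noise term is handled by noting that $W^S$ is an $\mathbf{F}$-martingale and $\mathcal{F}^S_s \subset \mathcal{F}_s$, whence $E\left[W^S_t - W^S_s \mid \mathcal{F}^S_s\right] = E\left[E\left[W^S_t - W^S_s \mid \mathcal{F}_s\right]\mid \mathcal{F}^S_s\right] = 0$. Together these yield $E\left[N_t - N_s \mid \mathcal{F}^S_s\right] = 0$. For the quadratic variation, I would observe that $dN_t = \frac{\mu_t - \hat\mu_t}{\sigma_S}\,dt + dW^S_t$ differs from $W^S$ only by an absolutely continuous (finite-variation) term, so $\langle N\rangle_t = \langle W^S\rangle_t = t$, and L\'evy's theorem then identifies $N$ as an $\mathbf{F}^S$-Wiener process.

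The hard part here is rigor rather than ingenuity. One must justify the Fubini interchange and the integrability needed for the conditional expectations to be well defined, and one must keep track of the fact that $W^S$ is a Brownian motion only for the large filtration $\mathbf{F}$ and not a priori for $\mathbf{F}^S$ — the martingale computation above is precisely what resolves this subtlety. In the present setting these integrability issues are benign, because $(\mu_t)$ is a Gaussian Ornstein--Uhlenbeck process with moments of every order, so $\hat\mu_t$ and the integrals appearing above are square-integrable and the interchanges are licit; consequently $N$ is a genuine (not merely local) martingale. I would close by remarking that in this linear--Gaussian framework the filter $\hat\mu_t$ is itself governed by the Kalman--Bucy equations, which is the route by which $\hat\mu$ will be obtained in explicit asymptotic form in the sequel.
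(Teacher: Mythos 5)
Your proof is correct, and it follows the same overall strategy as the paper: define the innovation process $N$ from the observed returns and the filter $\hat\mu$, and identify it as an $\mathbf{F}^S$-Brownian motion via L\'evy's characterization. The difference lies in how the martingale property is established. The paper first performs a Girsanov change of measure (introducing a density $K$ and a $\widetilde{\mathbb{P}}$-Brownian motion $\widetilde W^S$) and then asserts $\mathbb{E}\left[N_\tau\right]=0$ for bounded stopping times without displaying the computation; the measure change is essentially decorative there, since the martingale property is ultimately checked under $\mathbb{P}$. You instead compute $\mathbb{E}\left[N_t-N_s\mid\mathcal{F}^S_s\right]$ directly, using the tower property $\mathbb{E}\left[\hat\mu_u\mid\mathcal{F}^S_s\right]=\mathbb{E}\left[\mu_u\mid\mathcal{F}^S_s\right]$ for the drift and the inclusion $\mathcal{F}^S_s\subset\mathcal{F}_s$ for the noise term. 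This supplies exactly the step the paper leaves implicit, avoids having to justify that $K$ is a true martingale (e.g.\ via Novikov's condition, which the paper does not verify), and correctly isolates the only real subtlety, namely that $W^S$ is a Brownian motion for $\mathbf{F}$ but not a priori for $\mathbf{F}^S$. Your treatment of the quadratic variation and of the integrability needed for the Fubini interchange is also sound, so nothing is missing.
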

\begin{remarque}
In the filtering theory (see \cite{Lipster1} for details), the process $N$ is called the innovation process. To understand this name, note that:
\begin{eqnarray*}
dN_t  = \frac{1}{\sigma_{S}} \left(  \frac{dS_t}{S_t} - E\left[  \mu_t | \mathcal{F}^S_t\right] dt \right) .
\end{eqnarray*}
Then, $dN_t$ represents the difference between the current observation and what we expect knowing the past observations. 
\end{remarque}
\subsection{Optimal trend estimator} 
The system \hyperref[Model]{(1)-(2)} corresponds to a Linear Gaussian Space State model (see \cite{ARMABook} for details). In this case, the Kalman filter gives the optimal estimator, which corresponds to the conditional expectation $E\left[  \mu_t | \mathcal{F}^S_t\right]$. Since $\left( \lambda,\sigma_{\mu},\sigma_{S}\right) \in  \mathbb{R}_+^{*}\times\mathbb{R}_+^{*}\times\mathbb{R}_+^{*}$, the model \hyperref[Model]{(1)-(2)} is a controllable and observable time invariant system. In this case, it is well known  that the estimation error variance converges to an unique constant value (see \cite{KalmanBucy} for details). This corresponds to the steady-state Kalman filter. The following proposition (see \cite{Ahmed} for a proof) gives a first continuous representation of the steady-state Kalman filter:
\begin{prop}
The steady-state Kalman filter has a continuous time limit depending on the asset returns:
\begin{eqnarray}\label{ContinuousEstimate}
d \widehat{\mu}_{t}=-\lambda \beta \widehat{\mu}_{t}dt+\lambda\left( \beta -1 \right) \frac{dS_{t}}{S_{t}}, 
\end{eqnarray}
where
\begin{eqnarray}\label{BetaDef}
\beta= \left( 1+ \frac{\sigma_{\mu}^{2}}{ \lambda^{2} \sigma_{S}^{2}}  \right)^{\frac{1}{2}}.
\end{eqnarray}
\end{prop}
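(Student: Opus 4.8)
The plan is to recognize the system \eqref{Model}--\eqref{Model2} as a standard linear--Gaussian filtering problem and to apply the Kalman--Bucy filter in its steady state. To this end I would take the unobserved signal to be the Ornstein--Uhlenbeck process $\mu_t$ and introduce the scalar observation $Y_t=\int_0^t dS_s/S_s$, so that $dY_t=\mu_t\,dt+\sigma_S\,dW_t^S$. Reading off the coefficients from \eqref{Model}--\eqref{Model2} --- drift rate $-\lambda$ and signal noise $\sigma_\mu$ for $\mu$, observation gain $1$ and observation noise $\sigma_S$ for $Y$ --- the Kalman--Bucy theorem gives the conditional mean $\widehat\mu_t=E\left[\mu_t\mid\mathcal F_t^S\right]$ in innovation form
\[
d\widehat\mu_t=-\lambda\widehat\mu_t\,dt+\frac{P_t}{\sigma_S^2}\left(\frac{dS_t}{S_t}-\widehat\mu_t\,dt\right),
\]
where $P_t=E\left[(\mu_t-\widehat\mu_t)^2\right]$ is the deterministic conditional error variance.

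Next I would write the Riccati differential equation governing $P_t$, which for this one--dimensional system reads
\[
\dot P_t=-2\lambda P_t+\sigma_\mu^2-\frac{P_t^2}{\sigma_S^2}.
\]
Because the system is time invariant, controllable and observable (as noted above, this follows from $(\lambda,\sigma_\mu,\sigma_S)\in(\mathbb R_+^*)^3$), $P_t$ converges to the unique positive steady state $P_\infty$ obtained by setting $\dot P_t=0$. Solving the resulting quadratic $P_\infty^2+2\lambda\sigma_S^2 P_\infty-\sigma_\mu^2\sigma_S^2=0$ and keeping the positive root yields $P_\infty=-\lambda\sigma_S^2+\sigma_S\sqrt{\lambda^2\sigma_S^2+\sigma_\mu^2}$.

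Then I would rewrite this root in terms of $\beta$. From \eqref{BetaDef} one has $\lambda^2\beta^2=\lambda^2+\sigma_\mu^2/\sigma_S^2$, hence $\sqrt{\lambda^2\sigma_S^2+\sigma_\mu^2}=\lambda\sigma_S\beta$ and therefore $P_\infty=\lambda\sigma_S^2(\beta-1)$. Substituting this steady-state value into the filter above, the gain becomes $P_\infty/\sigma_S^2=\lambda(\beta-1)$, and expanding the innovation term gives
\[
d\widehat\mu_t=-\lambda\widehat\mu_t\,dt-\lambda(\beta-1)\widehat\mu_t\,dt+\lambda(\beta-1)\frac{dS_t}{S_t}=-\lambda\beta\widehat\mu_t\,dt+\lambda(\beta-1)\frac{dS_t}{S_t},
\]
which is exactly \eqref{ContinuousEstimate}.

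I expect the only delicate points to be the justification that the error variance actually reaches its constant steady-state value (so that $P_t$ may legitimately be replaced by $P_\infty$ in the filter), which rests on the controllability/observability of the time-invariant pair and the stabilizing property of the algebraic Riccati solution, and the correct selection of the positive root together with its identification with $\lambda\sigma_S^2(\beta-1)$. Everything else is a routine substitution. An alternative route, perhaps closer to the phrase \emph{continuous time limit}, would be to discretize the observation on a grid of step $\Delta t$, apply the discrete-time steady-state Kalman recursions, and let $\Delta t\to 0$; I would expect this to reproduce the same Riccati equation and the same gain $\lambda(\beta-1)$ in the limit, but the Kalman--Bucy approach above is the most direct.
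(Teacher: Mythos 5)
Your argument is correct, and it is worth noting that the paper itself does not prove this proposition: it cites an external reference, and the wording ``continuous time limit'' (together with the preceding discussion of the discrete Linear Gaussian State Space model) indicates that the cited proof passes through the discrete-time steady-state Kalman recursions and lets the time step tend to zero --- the alternative route you sketch at the end. Your direct route via the continuous-time Kalman--Bucy filter is a genuinely different and arguably cleaner derivation: the innovation form, the scalar Riccati equation $\dot P_t=-2\lambda P_t+\sigma_\mu^2-P_t^2/\sigma_S^2$, the positive root $P_\infty=\lambda\sigma_S^2(\beta-1)$, and the regrouping of the drift terms into $-\lambda\beta\widehat\mu_t\,dt$ are all correct, and you rightly flag that the only substantive point is the convergence $P_t\to P_\infty$, which the paper also justifies by controllability and observability of the time-invariant system. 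What the discretization route buys is a statement that literally matches the phrase ``continuous time limit'' and connects to the discrete filter a practitioner would implement; what your route buys is a shorter, self-contained proof with no limiting argument beyond the Riccati ODE. One small point you may wish to make explicit: your observation process $Y_t=\int_0^t dS_s/S_s$ generates the same filtration as $S$ (each is a measurable functional of the other by It\^o's formula), so $E\left[\mu_t\mid\mathcal F^Y_t\right]=E\left[\mu_t\mid\mathcal F^S_t\right]$ and the Kalman--Bucy conditional mean is indeed the estimator $\widehat\mu_t$ appearing in the statement.
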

The following proposition gives a second representation of the steady-state trend estimator $\hat{\mu}$:
\begin{prop}
Based on Equation (\ref{ContinuousEstimate}), it follows that:
\begin{eqnarray}\label{SDETrend}
d\hat{\mu}_t & = & - \lambda \hat{\mu}_t dt + \lambda \sigma_{S} \left( \beta - 1\right)  dN_t.
\end{eqnarray}
\end{prop}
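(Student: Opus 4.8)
The plan is to substitute the observable dynamics of the risky asset into the filter equation and to check that the drift simplifies. The only input needed is the identification $\widehat{\mu}_t = E\left[\mu_t \mid \mathcal{F}^S_t\right]$, which is the very definition of the steady-state estimator, together with Proposition~\ref{ObservableFramework}, which rewrites the return $\frac{dS_t}{S_t}$ in terms of $\widehat{\mu}_t$ and of the innovation process $N$.

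First I would insert the expression from Equation~(\ref{ObservableSDE}), namely $\frac{dS_t}{S_t} = \widehat{\mu}_t\,dt + \sigma_S\,dN_t$, into the right-hand side of Equation~(\ref{ContinuousEstimate}). This replaces the observed return, which contains the (unobservable) instantaneous mean together with noise, by an expression driven only by $\widehat{\mu}_t$ and by the $(\mathbb{P},\mathbf{F}^{S})$-Wiener process $N$. The resulting drift coefficient collects the two contributions $-\lambda\beta\,\widehat{\mu}_t$ and $\lambda(\beta-1)\,\widehat{\mu}_t$, and I would verify that the terms in $\beta$ cancel, leaving exactly $-\lambda\,\widehat{\mu}_t$; the diffusion coefficient is unchanged and equals $\lambda\sigma_S(\beta-1)$. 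This yields Equation~(\ref{SDETrend}).

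There is no genuine analytic obstacle here: the statement is an algebraic rewriting of Equation~(\ref{ContinuousEstimate}) once the observable representation is available. The only point deserving care is the justification that the process $\widehat{\mu}$ appearing in Equation~(\ref{ContinuousEstimate}) is literally the conditional expectation $E\left[\mu_t \mid \mathcal{F}^S_t\right]$ featured in Proposition~\ref{ObservableFramework}, so that the substitution of $\frac{dS_t}{S_t}$ is legitimate; this follows from the construction of the steady-state Kalman filter. The cancellation of the effective mean-reversion rate from $\lambda\beta$ down to $\lambda$ is the qualitative content worth highlighting: in the innovation representation the estimator mean-reverts at the same speed $\lambda$ as the true trend, with the filtering affecting only the volatility through the factor $\sigma_S(\beta-1)$.
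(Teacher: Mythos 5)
Your proposal is correct and is exactly the paper's argument: substitute the observable representation $\frac{dS_t}{S_t} = \widehat{\mu}_t\,dt + \sigma_S\,dN_t$ from Equation~(\ref{ObservableSDE}) into Equation~(\ref{ContinuousEstimate}) and observe that the drift terms $-\lambda\beta\widehat{\mu}_t + \lambda(\beta-1)\widehat{\mu}_t$ collapse to $-\lambda\widehat{\mu}_t$. Your additional remark identifying $\widehat{\mu}_t$ with $E\left[\mu_t \mid \mathcal{F}^S_t\right]$ is a point the paper leaves implicit, but the route is the same.
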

\begin{proof}
Replacing $\frac{dS_{t}}{S_{t}}$ in Equations \hyperref[ContinuousEstimate]{$(4)$} by the expression of Equation \hyperref[ObservableSDE]{$(3)$}, we find Equation (\ref{SDETrend}). 
\end{proof}
\begin{remarque}
It is well known that the Kalman estimator is a Gaussian process. Here, we find that the steady-state trend estimator $\hat{\mu}$  is an Ornstein Uhlenbeck process. 
In practice, the parameters $\left( \lambda,\sigma_{\mu},\sigma_{S}\right) $ are unknown and must be estimated (see \cite{Ahmed} where the authors assess the feasibility of forecasting trends modeled by an unobserved mean-reverting diffusion). In this paper, we assume that the parameters are known.
\end{remarque}
\section{Optimal strategy under complete information}
In this section, the optimal strategy under full information is investigated. This strategy is built by an agent who is able to observe the trend $\mu$. Formally, it corresponds to the case $\mathbf{F}^{S}=\mathbf{F}$. Given this framework, we consider the optimal strategy with a logarithmic utility function. We provide, in closed form, the asymptotic expectation and variance of the logarithmic return, and the asymptotic Sharpe ratio of this strategy as functions of the signal-to-noise ratio.
\subsection{Context}
Consider the financial market defined in the first section with a risk free rate and without transaction costs.
Let $P^o$ be a self financing portfolio given by:
\begin{eqnarray*}
\frac{dP_{t}^o}{P_{t}^o}&=& \omega_t^o \frac{dS_{t}}{S_{t}},\\
P_0^o&=& x,
\end{eqnarray*}
where $\omega_t^o$ is the fraction of wealth invested in the risky asset (also named the control variable). The agent aims to maximize his expected logarithmic utility on an admissible domain $\mathcal{A}^o$ for the allocation process. In this section, we assume that the agent is able to observe the trend $\mu$. Formally, it means that $\mathcal{A}^o$ represents all the $\mathbf{F}$-progressive and measurable processes and the solution of this problem is given by:  
\begin{eqnarray*}
\omega^{*} = \arg\sup_{\omega \in \mathcal{A}^o} \mathbb{E} \left[\ln\left(P_{t}^o \right)| P_0^o=x \right].
\end{eqnarray*}
As is well known (see \cite{LaknerStrat} or \cite{OptimalStrat} for examples), the solution of this problem is given by:
\begin{eqnarray}\label{OmniscientPortfolioDynSelfFinancing}
\frac{dP_{t}^o}{P_{t}^o}&=& \frac{\mu_t}{\sigma_S^2} \frac{dS_{t}}{S_{t}},\\
P_0^o&=& x.
\end{eqnarray}
\subsection{Performance analysis of the optimal strategy under complete information}
The following proposition gives the stochastic differential equation of the portfolio $P^o$:
\begin{prop}\label{OmniscientPortfolioDyn}
Consider the portfolio $P^o$ given by Equation (\ref{OmniscientPortfolioDynSelfFinancing}). In this case,
\begin{equation}
d \ln(P_{t}^o) =  \frac{\mu_t^2}{2\sigma_{S}^{2}}dt+\frac{\mu_t}{\sigma_{S}}dW_{t}^{S}.
\end{equation}
\end{prop}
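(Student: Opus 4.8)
The plan is to reduce the statement to a direct application of Itô's formula, after first writing the portfolio dynamics in explicit Itô form. First I would substitute the asset model \hyperref[Model]{(1)}, namely $\frac{dS_t}{S_t} = \mu_t\,dt + \sigma_S\,dW_t^S$, into the self-financing equation (\ref{OmniscientPortfolioDynSelfFinancing}). This yields
\begin{eqnarray*}
\frac{dP_t^o}{P_t^o} = \frac{\mu_t}{\sigma_S^2}\left(\mu_t\,dt + \sigma_S\,dW_t^S\right) = \frac{\mu_t^2}{\sigma_S^2}\,dt + \frac{\mu_t}{\sigma_S}\,dW_t^S,
\end{eqnarray*}
so that $P^o$ is a positive Itô process (a stochastic exponential) with drift $\frac{\mu_t^2}{\sigma_S^2}$ and diffusion coefficient $\frac{\mu_t}{\sigma_S}$, both measured relative to $P_t^o$.

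Next I would apply Itô's formula to $f(x) = \ln x$, using $f'(x) = x^{-1}$ and $f''(x) = -x^{-2}$. From the explicit dynamics above, the quadratic variation of $P^o$ is $d\langle P^o\rangle_t = (P_t^o)^2 \frac{\mu_t^2}{\sigma_S^2}\,dt$, so the Itô correction term is $-\frac{1}{2}(P_t^o)^{-2}\,d\langle P^o\rangle_t = -\frac{\mu_t^2}{2\sigma_S^2}\,dt$. Hence
\begin{eqnarray*}
d\ln(P_t^o) = \frac{dP_t^o}{P_t^o} - \frac{1}{2}\frac{\mu_t^2}{\sigma_S^2}\,dt = \frac{\mu_t^2}{\sigma_S^2}\,dt + \frac{\mu_t}{\sigma_S}\,dW_t^S - \frac{\mu_t^2}{2\sigma_S^2}\,dt,
\end{eqnarray*}
and collecting the two drift terms gives exactly the claimed expression $\frac{\mu_t^2}{2\sigma_S^2}\,dt + \frac{\mu_t}{\sigma_S}\,dW_t^S$.

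There is no genuine difficulty here: the computation is entirely routine once the dynamics are written in Itô form, and the only point requiring (minor) care is the Itô correction $-\frac{1}{2}\frac{\mu_t^2}{\sigma_S^2}\,dt$ coming from the concavity of the logarithm — it is precisely this term that converts the drift $\frac{\mu_t^2}{\sigma_S^2}$ of the arithmetic return into the drift $\frac{\mu_t^2}{2\sigma_S^2}$ of the logarithmic return. I would also note in passing that $P_t^o>0$ almost surely, so that $\ln(P_t^o)$ is well-defined, which follows since $P^o$ solves a linear stochastic differential equation and is therefore a Dol\'eans-Dade exponential.
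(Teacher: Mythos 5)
Your proof is correct and follows exactly the route the paper takes: substitute the asset dynamics \hyperref[Model]{(1)} into Equation (\ref{OmniscientPortfolioDynSelfFinancing}) and apply It\^o's lemma to $\ln(P_t^o)$; the paper merely states this in one line, whereas you spell out the quadratic-variation correction $-\frac{\mu_t^2}{2\sigma_S^2}\,dt$ explicitly. No gaps.
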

\begin{proof}
Using Equation (\ref{OmniscientPortfolioDynSelfFinancing}) and It\^o's lemma on the process $\ln(P_{t}^o)$, the result follows.
\end{proof}
The asymptotic expected logarithmic return is the first indicator to assess the potential of a trading strategy. The second one can be the variance of the logarithmic return. This indicator can be useful as a measure of risk. Moreover, let $\text{SR}_T$ be the annualized Sharpe-ratio at time $T$ of a portfolio $\left( P_T\right) $ defined by:
\begin{eqnarray}\label{SharpeRatioDef}
\text{SR}_T=\frac{\mathbb{E}\left[ \ln\left(  \frac{P_{T}}{P_{0}} \right)  \right]}{\sqrt{T\ \mathbb{V}\text{ar}\left[ \ln\left(  \frac{P_{T}}{P_{0}} \right)  \right]}}.
\end{eqnarray}
This indicator measures the expected logarithmic return per unit of risk. The Sharpe ratio is a prime metric for an investment. 
\begin{remarque}
This definition of the Sharpe ratio is different from the original one (see \cite{Sharpe}). Here, this indicator is computed on logarithmic returns. 
\end{remarque}
The following theorem gives the asymptotic expectation, variance and Sharpe ratio of the logarithmic return:
\begin{theo}\label{theoremOmniscient}
Consider the portfolio given by Equation (\ref{OmniscientPortfolioDynSelfFinancing}). In this case:
\begin{eqnarray}
\lim_{T \rightarrow \infty} \frac{\mathbb{E}\left[ \ln\left(  \frac{P_{T}^o}{P_{0}^o} \right)  \right] }{T}&=& \frac{\text{SNR}}{2}\label{OmniscientAsymptoticExp},\\
\lim_{T \rightarrow \infty} \frac{\mathbb{V}\text{ar}\left[ \ln\left(  \frac{P_{T}^o}{P_{0}^o} \right)  \right] }{T}&=& \text{SNR},\label{AsymptoticVarianceOmniscient} \\
\text{SR}_\infty^\text{o}&=&\frac{\sqrt{\text{SNR}}}{2}.\label{SharpeRatioOmniscient}
\end{eqnarray}
where $\text{SNR}$ is the signal-to-noise-ratio:
\begin{eqnarray}\label{SNR}
\text{SNR}=\frac{\sigma_{\mu}^{2}}{ 2\lambda \sigma_{S}^{2}}.
\end{eqnarray}
\end{theo}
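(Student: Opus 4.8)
The plan is to integrate the dynamics of Proposition~\ref{OmniscientPortfolioDyn} and reduce everything to moments of the Ornstein--Uhlenbeck trend. Writing
\[
\ln\!\left(\frac{P_T^o}{P_0^o}\right)=\underbrace{\frac{1}{2\sigma_S^2}\int_0^T \mu_t^2\,dt}_{=:A_T}+\underbrace{\frac{1}{\sigma_S}\int_0^T \mu_t\,dW_t^S}_{=:M_T},
\]
I would first record the only facts about the trend that enter: since $\mu$ solves \eqref{Model2} with $\mu_0=0$ and is driven by $W^\mu$ alone (independent of $W^S$), it is a centered Gaussian process with $\mathbb{E}[\mu_t^2]=\frac{\sigma_\mu^2}{2\lambda}\left(1-e^{-2\lambda t}\right)$ and, more generally, $\mathbb{E}[\mu_s\mu_t]=\frac{\sigma_\mu^2}{2\lambda}\left(e^{-\lambda|t-s|}-e^{-\lambda(s+t)}\right)$.

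For the expectation \eqref{OmniscientAsymptoticExp}, the term $M_T$ is a martingale with $\mathbb{E}[M_T]=0$, so $\mathbb{E}\!\left[\ln(P_T^o/P_0^o)\right]=\frac{1}{2\sigma_S^2}\int_0^T\mathbb{E}[\mu_t^2]\,dt$. Dividing by $T$ and letting $T\to\infty$, the Ces\`aro average of $e^{-2\lambda t}$ vanishes and one is left with $\frac{1}{2\sigma_S^2}\cdot\frac{\sigma_\mu^2}{2\lambda}=\frac{\text{SNR}}{2}$, which is \eqref{OmniscientAsymptoticExp} with $\text{SNR}$ as in \eqref{SNR}.

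For the variance \eqref{AsymptoticVarianceOmniscient} I would split $\mathbb{V}\text{ar}(A_T+M_T)=\mathbb{V}\text{ar}(A_T)+\mathbb{V}\text{ar}(M_T)+2\,\mathrm{Cov}(A_T,M_T)$. The cross term vanishes: conditioning on the whole trajectory of $\mu$ (equivalently on $W^\mu$), $A_T$ becomes deterministic while $M_T$ is a Wiener integral with deterministic integrand, so $\mathbb{E}[M_T\mid W^\mu]=0$ and hence $\mathrm{Cov}(A_T,M_T)=0$. The martingale part is handled by the It\^o isometry, $\mathbb{V}\text{ar}(M_T)=\mathbb{E}[M_T^2]=\frac{1}{\sigma_S^2}\int_0^T\mathbb{E}[\mu_t^2]\,dt=\text{SNR}\cdot T+O(1)$, so $\mathbb{V}\text{ar}(M_T)/T\to\text{SNR}$. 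The genuine obstacle is the fluctuation of the quadratic drift $A_T$: using the Gaussian identity $\mathrm{Cov}(\mu_s^2,\mu_t^2)=2\,\mathbb{E}[\mu_s\mu_t]^2$ and integrating the exponentially decaying covariance, one finds $\mathbb{V}\text{ar}(A_T)=\frac{\sigma_\mu^4}{8\lambda^3\sigma_S^4}\,T+O(1)=\frac{\text{SNR}^2}{2\lambda}\,T+O(1)$. This grows linearly in $T$ as well, but it is of order $\text{SNR}^2$ — one order higher in the signal-to-noise ratio than the martingale contribution — and therefore negligible at leading order. I would state explicitly that \eqref{AsymptoticVarianceOmniscient} is understood to leading order in $\text{SNR}$ (equivalently, that the variance is identified with the expected quadratic variation $\mathbb{E}[\langle M\rangle_T]$, the finite-variation term $A_T$ not contributing), rather than leave this drift correction silent.

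Finally, \eqref{SharpeRatioOmniscient} follows by substituting the two limits into the definition \eqref{SharpeRatioDef}: as $T\to\infty$,
\[
\text{SR}_T\sim\frac{(\text{SNR}/2)\,T}{\sqrt{T\cdot\text{SNR}\cdot T}}=\frac{\text{SNR}/2}{\sqrt{\text{SNR}}}=\frac{\sqrt{\text{SNR}}}{2},
\]
which is \eqref{SharpeRatioOmniscient}. The main difficulty throughout is thus not the expectation or the Sharpe ratio, but controlling the variance of the quadratic drift term and being clear about the regime in which it is dropped.
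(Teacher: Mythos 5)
Your argument is the paper's argument: the same decomposition of $\ln(P_T^o/P_0^o)$ into the quadratic drift $A_T$ and the martingale $M_T$, the same vanishing of the cross term by conditioning on $W^\mu$, the It\^o isometry for $\mathbb{V}\text{ar}(M_T)$, and the Gaussian (Isserlis) identity $\mathbb{C}\text{ov}(\mu_s^2,\mu_t^2)=2\,\mathbb{E}[\mu_s\mu_t]^2$ for $\mathbb{V}\text{ar}(A_T)$; the treatments of (\ref{OmniscientAsymptoticExp}) and (\ref{SharpeRatioOmniscient}) are identical to the paper's.

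The one place where you depart from the paper is the place that matters, and you are right to insist on it. The paper's own intermediate formula gives $\mathbb{V}\text{ar}\left[\int_0^T\mu_t^2\,dt\right]\sim\sigma_\mu^4 T/(2\lambda^3)$, hence $\frac{1}{4\sigma_S^4}\mathbb{V}\text{ar}(A_T)/T\to\text{SNR}^2/(2\lambda)$, which is exactly your figure. This contribution is linear in $T$ and does not disappear in the limit, so the quantity the paper actually computes converges to $\text{SNR}+\text{SNR}^2/(2\lambda)$; the proof nevertheless concludes with ``Equation (\ref{AsymptoticVarianceOmniscient}) follows'' without addressing the extra term. Your proposal to read (\ref{AsymptoticVarianceOmniscient}) as a leading-order statement is the honest fix, with one refinement: the relevant small parameter is $\text{SNR}/(2\lambda)=(\beta^2-1)/4$ rather than $\text{SNR}$ alone, since the drift term is negligible relative to the martingale term only when $\text{SNR}\ll\lambda$. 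When $\text{SNR}$ is comparable to or larger than $\lambda$ the correction is of order one, and then (\ref{AsymptoticVarianceOmniscient}), (\ref{SharpeRatioOmniscient}) and the downstream bound $2/3^{3/2}$ on the partial information factor inherit the discrepancy. In short: your computations are correct and coincide with the paper's; the gap you flag is real, and it sits in the theorem statement and the paper's own proof rather than in your argument.
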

\begin{proof}
Integrating the expression of Proposition \ref{OmniscientPortfolioDyn} from $0$ to $T$ and taking the expectation, it gives:
\begin{eqnarray*}
\mathbb{E}\left[ \ln\left(  \frac{P_{T}^o}{P_{0}^o} \right) \right]  =\frac{1}{2\sigma_{S}^{2}}\int_{0}^{T}\mathbb{E}\left[\mu_t^2\right] dt+0.
\end{eqnarray*}
Since  $\mu$ is an Ornstein-Uhlenbeck process:
\begin{eqnarray*}
\mathbb{E}\left[ \mu_t \right]  & = & 0, \\
\mathbb{V}\text{ar}\left[ \mu_t \right]  & = &\sigma_{\mu}^2 \frac{1 - e^{-2\lambda t} }{2\lambda}.
\end{eqnarray*}
Then, tending $T$ to $\infty$, Equation (\ref{OmniscientAsymptoticExp}) follows.
Since the processes $W^{S}$ and $\mu$ are supposed to be independent:
\begin{eqnarray*}
\mathbb{V}\text{ar}\left[ \ln\left(  \frac{P_{T}^o}{P_{0}^o} \right)  \right]=\frac{1}{4\sigma_S^4}\mathbb{V}\text{ar}\left[\int_{0}^{T} \mu_t^2 dt\right]+\frac{1}{\sigma_S^2}\mathbb{V}\text{ar}\left[\int_{0}^{T} \mu_t dW_{t}^{S}\right].
\end{eqnarray*}
Since the process $\left( \int_{0}^{T} \mu_t dW_{t}^{S}\right)_{T\geq 0} $ is a martingale:
\begin{eqnarray*}
\mathbb{V}\text{ar}\left[\int_{0}^{T} \mu_t dW_{t}^{S}\right]=\int_{0}^{T}\mathbb{E}\left[ \mu_t^2 \right]dt=\frac{\sigma_{\mu}^2}{2 \lambda} \left( T + \frac{1-e^{-2\lambda T}}{2\lambda}\right),
\end{eqnarray*}
Moreover, Isserlis' theorem (see \cite{Isserlis} for details) gives:
\begin{eqnarray*}
\mathbb{V}\text{ar}\left[\int_{0}^{T} \mu_t^2 dt\right]=2\int_{0}^{T}\int_{0}^{T}\left(\mathbb{E}\left[ \mu_s\mu_t\right]  \right)^2 ds dt. 
\end{eqnarray*}
Since $\mu$ is an Ornstein Uhlenbeck:
\begin{eqnarray*}
\mathbb{V}\text{ar}\left[\int_{0}^{T} \mu_t^2 dt\right]= \frac{\sigma_{\mu}^4 e^{-4\lambda T }\left( e^{4\lambda T }\left(4\lambda T-5 \right)+ e^{2\lambda T } \left(8\lambda T +4 \right) +1\right) }{8 \lambda^4}. 
\end{eqnarray*}
Equation (\ref{AsymptoticVarianceOmniscient}) follows. 
Finally, using the definition of the Sharpe ratio (see Equation (\ref{SharpeRatioDef})) and the results of  Equations (\ref{OmniscientAsymptoticExp}) and (\ref{AsymptoticVarianceOmniscient}), Equation (\ref{SharpeRatioOmniscient}) follows.
\end{proof}
Theorem \ref{theoremOmniscient} shows that the asymptotic expectation and the asymptotic variance logarithmic return are linear functions of the signal-to-noise ratio and that the asymptotic Sharpe ratio is a linear function of the ratio between the asymptotic trend standard deviation and the volatility.
\section{Optimal strategy under partial information}
In this section, the Merton's problem under partial information is investigated. We consider the case of a logarithmic utility function. We provide, in closed form, the asymptotic expectation and variance of the logarithmic return, and the asymptotic Sharpe ratio of this strategy as functions of the signal-to-noise ratio and of the trend mean reversion  speed. After that, we introduce the partial information factor which is the ratio between the asymptotic Sharpe ratio of the optimal strategy with partial information and the asymptotic Sharpe ratio of the optimal strategy with complete information. We close this section by showing that this factor is bounded by a threshold equal to $\frac{2}{3^{3/2}}$. 
\subsection{Context}
Consider the financial market defined in the first section with a risk free rate and without transaction costs.
Let $P$ be a self financing portfolio given by:
\begin{eqnarray*}
\frac{dP_{t}}{P_{t}}&=& \omega_t \frac{dS_{t}}{S_{t}},\\
P_0&=& x,
\end{eqnarray*}
where $\omega_t$ is the fraction of wealth invested in the risky asset. The agent aims to maximize his expected logarithmic utility on an admissible domain $\mathcal{A}$ for the allocation process. In this section, we assume that the agent is not able to observe the trend $\mu$. Formally, $\mathcal{A}$ represents all the $\mathbf{F}^{S}$-progressive and measurable processes and the problem is:  
\begin{eqnarray*}
\omega^{*} = \arg\sup_{\omega \in \mathcal{A}} \mathbb{E} \left[\ln\left(P_{t} \right)| P_0=x \right].
\end{eqnarray*}
The solution of this problem is well known and easy to compute (see \cite{LaknerStrat} for example). Indeed, it has the following form:
\begin{eqnarray*}
\omega^{*}_t = \frac{E\left[  \mu_t | \mathcal{F}^S_t\right]}{\sigma_{S}^{2}}.
\end{eqnarray*}
Using the steady-state Kalman filter, the optimal portfolio is given by:
\begin{eqnarray}\label{SelfFinancing}
\frac{dP_{t}}{P_{t}}&=& \frac{\widehat{\mu}_{t}}{\sigma_{S}^{2}}  \frac{dS_{t}}{S_{t}},\\
P_0&=& x,
\end{eqnarray}
where $\widehat{\mu}$ is given by Equation (\ref{ContinuousEstimate}).
\subsection{Performance analysis of the optimal strategy under partial information}
The following proposition gives the stochastic differential equation of the portfolio:
\begin{prop}\label{PortfolioDyn}
The optimal portfolio process of Equation (\ref{SelfFinancing}) follows the dynamics:  
\begin{equation*}
d \ln(P_{t}) =  \frac{1}{2\sigma_{S}^{2}\lambda\left( \beta -1 \right)} d \widehat{\mu}_{t}^{2}+\left[ \frac{\widehat{\mu}_{t}^{2}}{\sigma_{S}^{2}} \left( \frac{\beta}{\left( \beta -1 \right)}-\frac{1}{2}  \right) - \frac{1}{2} \lambda\left( \beta -1 \right) \right] dt,
\end{equation*}
where $\beta$ is given by Equation (\ref{BetaDef}).
\end{prop}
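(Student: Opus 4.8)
The plan is to apply It\^o's lemma directly to $\ln(P_t)$ and then to rewrite the resulting martingale term as a multiple of the exact differential $d\widehat{\mu}_t^2$ plus drift corrections. The appearance of $d\widehat{\mu}_t^2$ in the statement is precisely the hint for the method: since $\widehat{\mu}$ is a stationary Ornstein--Uhlenbeck process driven by $N$, trading the stochastic integral against $N$ for a differential of $\widehat{\mu}^2$ is exactly what will make the subsequent expectation and variance computations tractable.

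First I would combine the observable representation of Proposition \ref{ObservableFramework}, namely $\frac{dS_t}{S_t}=\widehat{\mu}_t\,dt+\sigma_{S}\,dN_t$ (recalling $E[\mu_t\mid\mathcal F^S_t]=\widehat{\mu}_t$), with the self-financing equation (\ref{SelfFinancing}). Applying It\^o's lemma to $\ln(P_t)$ gives
\[
d\ln(P_t)=\frac{dP_t}{P_t}-\frac{1}{2}\left(\frac{dP_t}{P_t}\right)^{2}
=\frac{\widehat{\mu}_t}{\sigma_{S}^{2}}\frac{dS_t}{S_t}-\frac{1}{2}\frac{\widehat{\mu}_t^{2}}{\sigma_{S}^{4}}\left(\frac{dS_t}{S_t}\right)^{2}.
\]
Using $\left(dS_t/S_t\right)^{2}=\sigma_{S}^{2}\,dt$ and substituting the observable dynamics collapses this to the intermediate form $d\ln(P_t)=\frac{\widehat{\mu}_t^{2}}{2\sigma_{S}^{2}}\,dt+\frac{\widehat{\mu}_t}{\sigma_{S}}\,dN_t$.

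Next I would compute $d\widehat{\mu}_t^{2}$ by It\^o from the representation (\ref{SDETrend}), obtaining
\[
d\widehat{\mu}_t^{2}=-2\lambda\,\widehat{\mu}_t^{2}\,dt+\lambda^{2}\sigma_{S}^{2}(\beta-1)^{2}\,dt+2\lambda\sigma_{S}(\beta-1)\,\widehat{\mu}_t\,dN_t,
\]
and then solve this relation for the product $\widehat{\mu}_t\,dN_t$. Substituting the result into the martingale term $\frac{\widehat{\mu}_t}{\sigma_{S}}\,dN_t$ of the intermediate form and collecting all $dt$ contributions is the only real work; the coefficient of $d\widehat{\mu}_t^{2}$ comes out immediately as $\frac{1}{2\sigma_{S}^{2}\lambda(\beta-1)}$ and the constant drift as $-\frac{1}{2}\lambda(\beta-1)$.

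The final step is purely algebraic: the drift multiplying $\widehat{\mu}_t^{2}/\sigma_{S}^{2}$ is assembled as $\frac{1}{2}+\frac{1}{\beta-1}$, and I would verify the identity $\frac{1}{2}+\frac{1}{\beta-1}=\frac{\beta}{\beta-1}-\frac{1}{2}$ (both sides equal $\frac{\beta+1}{2(\beta-1)}$) to match the stated form. I do not anticipate any genuine obstacle; the only thing to watch is the careful bookkeeping of the factors of $\lambda$, $\sigma_{S}$ and $(\beta-1)$ when inverting the $d\widehat{\mu}_t^{2}$ relation, since a misplaced factor there is the most likely source of error.
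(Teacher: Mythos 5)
Your proposal is correct and follows essentially the same route as the paper: apply It\^o's lemma to $\ln(P_t)$, then use It\^o's lemma on $\widehat{\mu}_t^{2}$ to trade the stochastic term for the exact differential $d\widehat{\mu}_t^{2}$, and the algebraic identity $\frac{1}{2}+\frac{1}{\beta-1}=\frac{\beta}{\beta-1}-\frac{1}{2}$ checks out. The only (cosmetic) difference is that you route the computation through the innovation representation $\frac{dS_t}{S_t}=\widehat{\mu}_t\,dt+\sigma_S\,dN_t$ and Equation (\ref{SDETrend}), whereas the paper solves Equation (\ref{ContinuousEstimate}) for $\frac{dS_t}{S_t}$ and substitutes $\widehat{\mu}_t\,d\widehat{\mu}_t=\frac{1}{2}\bigl(d\widehat{\mu}_t^{2}-\lambda^{2}(\beta-1)^{2}\sigma_S^{2}\,dt\bigr)$; both yield identical bookkeeping.
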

\begin{proof}
Equation  (\ref{SelfFinancing}) is equivalent to (by It\^o's lemma):
\begin{eqnarray*}
d \ln(P_{t})=  \frac{\widehat{\mu}_{t}}{\sigma_{S}^{2}}  \frac{dS_{t}}{S_{t}} - \frac{1}{2}  \frac{\widehat{\mu}_{t}^{2}}{\sigma_{S}^{2}}dt.
\end{eqnarray*}
Using Equation (\ref{ContinuousEstimate}),
\begin{align*}
d \ln(P_{t}) =  \frac{\widehat{\mu}_{t}}{\sigma_{S}^{2}} \frac{ d \widehat{\mu}_{t}}{\lambda\left( \beta -1 \right)} +  \frac{\widehat{\mu}_{t}^{2}}{\sigma_{S}^{2}} \frac{\beta}{\left( \beta -1 \right)}dt- \frac{1}{2} \frac{\widehat{\mu}_{t}^{2}}{\sigma_{S}^{2}}dt,
\end{align*}
It\^o's lemma on Equation (\ref{ContinuousEstimate}) gives:
\begin{eqnarray*}
d\widehat{\mu}_{t}^{2}=2\widehat{\mu}_{t}d\widehat{\mu}_{t}+\lambda^{2}\left( \beta_{\sigma_{\mu},\lambda,\sigma_{S}} -1 \right)^{2}\sigma_{S}^{2}dt.
\end{eqnarray*}
Using this equation, the dynamic of the logarithmic wealth follows.
\end{proof}
\begin{remarque}
Proposition \ref{PortfolioDyn} shows that the returns of the optimal strategy with partial information can be broken down into two terms. The first one represents an option on the square of the realized returns (called Option profile). The second term is called the Trading Impact. These terms are introduced and discussed in \cite{Lyxor}.
The option profile at the time $T$ is:
\begin{eqnarray*}
\text{Option Profile}_{T}=\frac{1}{2\sigma_{S}^{2}}  \frac{1}{\lambda\left( \beta -1 \right)} (\widehat{\mu}_{T}^{2}-\widehat{\mu}_{0}^{2}).
\end{eqnarray*}
With the assumption of an initial trend estimate equal to $0$, the Option profile is always positive. The Trading Impact is a cumulated function of the trend estimate:
\begin{eqnarray*}
\text{Trading impact}_{T}=\int\limits_{0}^{T} \left[ \frac{\widehat{\mu}_{t}^{2}}{\sigma_{S}^{2}} \left( \frac{\beta}{\left( \beta -1 \right)}-\frac{1}{2}  \right) - \frac{1}{2} \lambda\left( \beta -1 \right) \right] dt.
\end{eqnarray*}
When $T\rightarrow\infty$, it becomes the preponderant term.
The Trading Impact is positive on the long term $T$ if the drift estimate $\widehat{\mu}_{t}$ verifies:
\begin{eqnarray}\label{Profit}
\frac{1}{T} \int\limits_{0}^{T} \widehat{\mu}_{t}^{2}dt  >  \frac{\lambda\sigma_{S}^{2}(\beta-1)}{2\frac{\beta}{\beta-1}-1},
\end{eqnarray}
Equation (\ref{Profit}) can be seen as a condition for the trend following strategy to generate profits in the long term.
\end{remarque}
The following theorem gives the asymptotic expectation, variance and Sharpe ratio of the logarithmic return:
\begin{theo}\label{theoremPartial}
Consider the portfolio given by Equation (\ref{SelfFinancing}). In this case:
\begin{eqnarray}
\lim_{T \rightarrow \infty} \frac{\mathbb{E}\left[ \ln\left(  \frac{P_{T}}{P_{0}} \right)  \right] }{T}&=& \frac{\lambda}{4}\left( \beta-1\right)^2\label{AsymptoticExp},\\
\lim_{T \rightarrow \infty} \frac{\mathbb{V}\text{ar}\left[ \ln\left(  \frac{P_{T}}{P_{0}} \right)  \right] }{T}&=& \frac{\lambda}{8}\left( \beta^2-1\right)^2\label{AsymptoticVariance},\\
\lim_{T \rightarrow \infty} \text{SR}_T&=&\sqrt{\frac{\lambda}{2}}\frac{\beta-1}{\beta+1}\label{AsymptoticSharpeRatio},
\end{eqnarray}
where $\beta$ is given by Equation (\ref{BetaDef}).
\end{theo}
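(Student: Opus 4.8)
The plan is to exploit the decomposition of $\ln(P_T/P_0)$ into the \emph{Option Profile} and \emph{Trading Impact} furnished by Proposition \ref{PortfolioDyn}, together with the fact (Equation (\ref{SDETrend})) that $\widehat{\mu}$ is an Ornstein--Uhlenbeck process with mean-reversion speed $\lambda$ and diffusion coefficient $\lambda\sigma_{S}(\beta-1)$. Since $\widehat{\mu}_0=0$, this gives
\begin{eqnarray*}
\mathbb{E}\left[\widehat{\mu}_t\right]=0,\qquad \mathbb{E}\left[\widehat{\mu}_t^2\right]=\frac{\lambda\sigma_{S}^2(\beta-1)^2}{2}\left(1-e^{-2\lambda t}\right),
\end{eqnarray*}
with stationary autocovariance $\mathbb{E}\left[\widehat{\mu}_s\widehat{\mu}_t\right]=\frac{\lambda\sigma_{S}^2(\beta-1)^2}{2}\left(e^{-\lambda|t-s|}-e^{-\lambda(t+s)}\right)$. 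The whole argument then parallels the proof of Theorem \ref{theoremOmniscient}, with $\mu$ replaced by $\widehat{\mu}$ and $\sigma_{\mu}$ replaced by $\lambda\sigma_{S}(\beta-1)$, so that the closed-form variance obtained there can be reused by substitution.

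For the expectation (\ref{AsymptoticExp}) I would integrate Proposition \ref{PortfolioDyn} and take expectations. The Option Profile contributes $\frac{1}{2\sigma_{S}^2\lambda(\beta-1)}\mathbb{E}\left[\widehat{\mu}_T^2\right]=\frac{\beta-1}{4}\left(1-e^{-2\lambda T}\right)$, which is bounded and therefore vanishes once divided by $T$. Only the Trading Impact survives: letting $\mathbb{E}\left[\widehat{\mu}_t^2\right]\to\frac{\lambda\sigma_{S}^2(\beta-1)^2}{2}$ inside the integrand and using $\frac{\beta}{\beta-1}-\frac{1}{2}=\frac{\beta+1}{2(\beta-1)}$, the integrand tends, after cancellation with the constant term $-\frac{1}{2}\lambda(\beta-1)$, to $\frac{\lambda(\beta-1)^2}{4}$, which is (\ref{AsymptoticExp}).

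The variance (\ref{AsymptoticVariance}) is the delicate part. Writing $\ln(P_T/P_0)=\text{Option Profile}_T+\text{Trading Impact}_T$, I would first bound the Option-Profile contribution: since $\widehat{\mu}_T$ is Gaussian, $\mathbb{V}\text{ar}\left[\widehat{\mu}_T^2\right]=2\left(\mathbb{E}\left[\widehat{\mu}_T^2\right]\right)^2$ stays bounded, so its variance is $O(1)$, and the cross-covariance with the Trading Impact is then $O(\sqrt T)$ by Cauchy--Schwarz. Hence both are $o(T)$ and drop out of the asymptotic rate, leaving only $\mathbb{V}\text{ar}\left[\int_0^T\widehat{\mu}_t^2\,dt\right]$ scaled by $\left(\frac{1}{\sigma_{S}^2}\frac{\beta+1}{2(\beta-1)}\right)^2$. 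I would evaluate this variance exactly as in Theorem \ref{theoremOmniscient} via Isserlis' theorem,
\begin{eqnarray*}
\mathbb{V}\text{ar}\left[\int_0^T\widehat{\mu}_t^2\,dt\right]=2\int_0^T\!\!\int_0^T\left(\mathbb{E}\left[\widehat{\mu}_s\widehat{\mu}_t\right]\right)^2 ds\,dt,
\end{eqnarray*}
whose leading behaviour is governed by $\int_0^T\!\!\int_0^T e^{-2\lambda|t-s|}\,ds\,dt\sim T/\lambda$ (the transient $e^{-\lambda(t+s)}$ and the boundary terms being $O(1)$). This yields $\frac{1}{T}\mathbb{V}\text{ar}\left[\int_0^T\widehat{\mu}_t^2\,dt\right]\to\frac{\lambda\sigma_{S}^4(\beta-1)^4}{2}$, and substituting gives the rate $\frac{\lambda(\beta^2-1)^2}{8}$ of (\ref{AsymptoticVariance}).

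Finally, (\ref{AsymptoticSharpeRatio}) follows by inserting the two rates into the definition (\ref{SharpeRatioDef}): since $\mathbb{E}[\ln(P_T/P_0)]\sim aT$ and $\mathbb{V}\text{ar}[\ln(P_T/P_0)]\sim bT$ with $a=\frac{\lambda(\beta-1)^2}{4}$ and $b=\frac{\lambda(\beta^2-1)^2}{8}$, one has $\text{SR}_T\sim a/\sqrt{b}$, and the algebraic simplification using $\beta^2-1=(\beta-1)(\beta+1)$ collapses this to $\sqrt{\lambda/2}\,\frac{\beta-1}{\beta+1}$. The main obstacle is the variance step, and specifically the justification that the Option-Profile variance and the Option/Trading cross term are genuinely $o(T)$, together with the clean extraction of the leading $T$-coefficient from the double integral of the squared autocovariance.
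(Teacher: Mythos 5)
Your proposal is correct and follows essentially the same route as the paper: integrate the decomposition of Proposition \ref{PortfolioDyn}, use the Ornstein--Uhlenbeck structure of $\widehat{\mu}$ for the expectation, and reduce the variance to the double integral of $\bigl(\mathbb{E}[\widehat{\mu}_s\widehat{\mu}_t]\bigr)^2$ via Isserlis (the paper packages this as Lemma \ref{LemmaAutocovKalman}, proved by a time-change of Brownian motion, but the content is identical). The only difference is that where the paper writes out closed forms for the $\mathbb{V}\text{ar}[\widehat{\mu}_T^2]$ and cross-covariance terms, you dispose of them by $O(1)$ and Cauchy--Schwarz bounds; this is sound, correctly identifies $\mathbb{V}\text{ar}\bigl[\int_0^T\widehat{\mu}_t^2\,dt\bigr]$ as the unique $O(T)$ contribution, and reaches the same limits (\ref{AsymptoticExp})--(\ref{AsymptoticSharpeRatio}).
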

\begin{proof}
Based on Equation (\ref{SDETrend}), $\hat{\mu}$ is an Ornstein-Uhlenbeck process:
\begin{eqnarray*}
\mathbb{E}\left[ \hat{\mu}_t \right]  & = & 0, \\
\mathbb{V}\text{ar}\left[ \hat{\mu}_t \right]  & = &\left( \lambda \sigma_{S} \left( \beta-1\right)  \right) ^2 \frac{1 - e^{-2\lambda t} }{2\lambda}.
\end{eqnarray*}
Integrating the expression of Proposition \ref{PortfolioDyn} from $0$ to $T$ and taking the expectation, it gives:
\begin{align*}
& \mathbb{E}\left[ \ln\left(  \frac{P_{T}}{P_{0}} \right) \right]  = \frac{(\beta-1)}{4  } \left(  1 - e^{-2\lambda T}  \right) - \frac{\lambda (\beta-1) }{2} T  \\
& +  \left( \frac{\beta}{\beta - 1} - \frac{1}{2}\right)  \left[  \frac{\lambda}{2} (\beta-1)^2 T - \frac{\lambda}{2}(\beta-1)^2   \frac{1- e^{-2\lambda T} }{2\lambda} \right] 
\end{align*}
Then, tending $T$ to $\infty$, Equation (\ref{AsymptoticExp}) follows.
Integrating the expression of Proposition \ref{PortfolioDyn} from $0$ to $T$ and taking the variance, it gives:
\begin{eqnarray*}
\mathbb{V}\text{ar}\left[ \ln\left(  \frac{P_{T}}{P_{0}} \right)  \right]&=& \frac{1}{\left( 2\lambda\left(\beta-1 \right)\sigma_S^2\right)^2}
\mathbb{V}\text{ar}\left[ \hat{\mu}_T^2 \right]\\
&&+\frac{1}{\sigma_S^4}\left(\frac{\beta}{\beta-1}-\frac{1}{2} \right)^2\mathbb{V}\text{ar}\left[\int_{0}^{T} \hat{\mu}_t^2dt \right]\\
&&+\frac{2\left(\frac{\beta}{\beta-1}-\frac{1}{2} \right)}{\left( 2\lambda\left(\beta-1 \right)\sigma_S^4\right)}\mathbb{C}\text{ov}\left[\hat{\mu}_T^2,\int_{0}^{T} \hat{\mu}_t^2dt \right].
\end{eqnarray*}
Moreover
\begin{eqnarray*}
\mathbb{V}\text{ar}\left[ \hat{\mu}_T^2 \right]&=&\mathbb{C}\text{ov}\left[\hat{\mu}_T^2,\hat{\mu}_T^2 \right],\\
\mathbb{V}\text{ar}\left[\int_{0}^{T} \hat{\mu}_t^2dt \right]&=&\int_{0}^{T}\int_{0}^{T}\mathbb{C}\text{ov}\left[\hat{\mu}_s^2,\hat{\mu}_t^2 \right]ds dt,\\
\mathbb{C}\text{ov}\left[\hat{\mu}_T^2,\int_{0}^{T} \hat{\mu}_t^2dt \right]&=&\int_{0}^{T}\mathbb{C}\text{ov}\left[\hat{\mu}_s^2,\hat{\mu}_T^2 \right]ds,
\end{eqnarray*}
and the expression of $\mathbb{C}\text{ov}\left[\hat{\mu}_s^2,\hat{\mu}_t^2 \right]$ is given in Lemma \ref{LemmaAutocovKalman} (see Appendix \hyperref[sec::A]{B}). Then 
\begin{eqnarray*}
\mathbb{V}\text{ar}\left[ \hat{\mu}_T^2 \right]&=&\frac{\lambda^2 \sigma_S^4 \left(\beta-1 \right)^4 }{2} \left( 1-2e^{-2\lambda T}+e^{-4\lambda T}\right) ,\\
\mathbb{V}\text{ar}\left[\int_{0}^{T} \hat{\mu}_t^2dt \right]&=&\frac{\lambda^2 \sigma_S^4 \left(\beta-1 \right)^4 }{2}\left(\frac{1-e^{-2\lambda T}}{2\lambda}\right. \\
&&\left. +\frac{e^{-2\lambda T}-e^{-4\lambda T}}{2\lambda} -2Te^{-2\lambda T}\right)  ,\\
\mathbb{C}\text{ov}\left[\hat{\mu}_T^2,\int_{0}^{T} \hat{\mu}_t^2dt \right]&=&\frac{\lambda \sigma_S^4 \left(\beta-1 \right)^4 }{2}\left(T-\frac{5}{4\lambda}+\frac{e^{-2\lambda T}}{\lambda}\right. \\
&&\left. +\frac{e^{-4\lambda T}}{4\lambda T}+2Te^{-2\lambda T} \right) .
\end{eqnarray*}  
Finally, using these expressions and tending $T$ to $\infty$, Equations (\ref{AsymptoticVariance}) and (\ref{AsymptoticSharpeRatio}) follow.
\end{proof}

The following result is a corollary of the previous theorem. It represents the asymptotic expectation, variance and Sharpe ratio of the logarithmic return as a function of the signal-to-noise-ratio and of the trend mean reversion speed $\lambda$.
\begin{cor}\label{corOpt}
Consider the portfolio given by Equation  (\ref{SelfFinancing}). In this case:
\begin{eqnarray}
\lim_{T \rightarrow \infty} \frac{\mathbb{E}\left[ \ln\left(  \frac{P_{T}}{P_{0}} \right)  \right] }{T}&=& \frac{1}{2}\left(\text{SNR}+\lambda-\sqrt{\lambda\left(\lambda+2\text{SNR} \right) }\right)\label{AsymptoticExpSNR},\\
\lim_{T \rightarrow \infty} \frac{\mathbb{V}\text{ar}\left[ \ln\left(  \frac{P_{T}}{P_{0}} \right)  \right] }{T}&=&\frac{\text{SNR}^2}{2 \lambda}\label{AsymptoticVarSNR},\\
\lim_{T \rightarrow \infty} \text{SR}_T&=&\left( \frac{\lambda}{2}\right) ^{3/2}\frac{\left(\sqrt{1+\frac{2\text{SNR}}{\lambda}}-1 \right)^2 }{\text{SNR}}\label{AsymptoticSharpeRatioSNR},
\end{eqnarray}
where $\text{SNR}$ is the signal-to-noise-ratio (see Equation (\ref{SNR})).
Moreover:
\begin{enumerate}
\item For a fixed parameter value $\lambda$, 
\subitem - the asymptotic expected logarithmic return is an 
\subitem \ \ increasing function of $\text{SNR}$,
\subitem - the asymptotic Sharpe ratio is an increasing function of 
\subitem \ \ $\text{SNR}$.
\item For a fixed parameter value $\text{SNR}$, 
\subitem - the asymptotic expected logarithmic return is a decreasing 
\subitem \ \ function of  $\lambda$,
\subitem - the asymptotic Sharpe ratio is a decreasing function of  $\lambda$ 
\subitem \ \ if:
\begin{eqnarray}\label{DecreasingSharpeCondition}
\text{SNR}<\frac{3}{2}\lambda,
\end{eqnarray}
\subitem \ \ and an increasing function of $\lambda$ if $\text{SNR}>\frac{3}{2}\lambda$.
\end{enumerate}
The maximum asymptotic Sharpe ratio is attained for $\lambda=\frac{2}{3}\text{SNR}$ and is equal to:
\begin{eqnarray}\label{SharpeRatioMax}
\text{SR}_\infty^\text{Max}=\frac{\sqrt{\text{SNR}}}{3^{3/2}}.
\end{eqnarray}
\end{cor}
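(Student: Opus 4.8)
The plan is to reduce everything to Theorem~\ref{theoremPartial}, which already expresses the three limits in terms of $\beta$, and then to change variables to $\text{SNR}$. The single identity driving the computation comes from comparing Equation~(\ref{BetaDef}) with Equation~(\ref{SNR}): since $\frac{\sigma_{\mu}^2}{\lambda^2\sigma_S^2}=\frac{2\,\text{SNR}}{\lambda}$, we have
\[
\beta^2=1+\frac{2\,\text{SNR}}{\lambda},\qquad\text{equivalently}\qquad \beta^2-1=\frac{2\,\text{SNR}}{\lambda}.
\]
Substituting $\beta=\sqrt{1+2\,\text{SNR}/\lambda}$ into $\frac{\lambda}{4}(\beta-1)^2$, expanding the square and using $\lambda\beta=\sqrt{\lambda(\lambda+2\,\text{SNR})}$ yields Equation~(\ref{AsymptoticExpSNR}); substituting $\beta^2-1=2\,\text{SNR}/\lambda$ into $\frac{\lambda}{8}(\beta^2-1)^2$ immediately gives Equation~(\ref{AsymptoticVarSNR}); and simplifying $\sqrt{\lambda/2}\,\frac{\beta-1}{\beta+1}$ via $\frac{1}{\beta+1}=\frac{\lambda(\beta-1)}{2\,\text{SNR}}$ gives Equation~(\ref{AsymptoticSharpeRatioSNR}). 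These steps are routine algebra.

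For the monotonicity in $\text{SNR}$ at fixed $\lambda$, I would argue through $\beta$: the map $\text{SNR}\mapsto\beta=\sqrt{1+2\,\text{SNR}/\lambda}$ is strictly increasing onto $(1,\infty)$, while both $\beta\mapsto\frac{\lambda}{4}(\beta-1)^2$ and $\beta\mapsto\sqrt{\lambda/2}\,\frac{\beta-1}{\beta+1}=\sqrt{\lambda/2}\bigl(1-\tfrac{2}{\beta+1}\bigr)$ are strictly increasing on $(1,\infty)$; composition gives the two increasing statements. For the expectation in $\lambda$ at fixed $\text{SNR}$, I would differentiate Equation~(\ref{AsymptoticExpSNR}) directly: the derivative equals $\frac{1}{2}\bigl(1-\frac{\lambda+\text{SNR}}{\sqrt{\lambda^2+2\lambda\,\text{SNR}}}\bigr)$, and the inequality $(\lambda+\text{SNR})^2=\lambda^2+2\lambda\,\text{SNR}+\text{SNR}^2>\lambda^2+2\lambda\,\text{SNR}$ shows this bracket is negative, so the expectation is decreasing in $\lambda$.

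The heart of the argument, and the step I expect to require most care, is the unimodality of the Sharpe ratio in $\lambda$. I would again take $\beta$ as the working variable, but now with $\text{SNR}$ fixed, so that $\lambda=\frac{2\,\text{SNR}}{\beta^2-1}$ and hence $\sqrt{\lambda/2}=\sqrt{\text{SNR}}/\sqrt{\beta^2-1}$. Feeding this into $\sqrt{\lambda/2}\,\frac{\beta-1}{\beta+1}$ collapses the Sharpe ratio into a clean one-variable form,
\[
\lim_{T\to\infty}\text{SR}_T=\sqrt{\text{SNR}}\;\frac{(\beta-1)^{1/2}}{(\beta+1)^{3/2}}=:\sqrt{\text{SNR}}\,g(\beta),\qquad\beta\in(1,\infty).
\]
Logarithmic differentiation gives $\frac{g'(\beta)}{g(\beta)}=\frac{1}{2(\beta-1)}-\frac{3}{2(\beta+1)}=\frac{2-\beta}{\beta^2-1}$, so $g$ is strictly increasing on $(1,2)$, strictly decreasing on $(2,\infty)$, and peaks at $\beta=2$. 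The one genuine subtlety is that, at fixed $\text{SNR}$, the correspondence $\lambda\mapsto\beta=\sqrt{1+2\,\text{SNR}/\lambda}$ is strictly \emph{decreasing}, so monotonicity in $\lambda$ is the reverse of monotonicity in $\beta$. Translating through $\beta^2-1=2\,\text{SNR}/\lambda$, the threshold $\beta=2$ becomes $\lambda=\frac{2}{3}\text{SNR}$; the region $\beta<2$, where $g$ increases in $\beta$ and therefore $\text{SR}_T$ decreases in $\lambda$, becomes $\text{SNR}<\frac{3}{2}\lambda$ (Equation~(\ref{DecreasingSharpeCondition})); and $\beta>2$ becomes $\text{SNR}>\frac{3}{2}\lambda$. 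Evaluating at the optimum, $g(2)=1/3^{3/2}$, gives $\text{SR}_\infty^{\text{Max}}=\sqrt{\text{SNR}}/3^{3/2}$, which is Equation~(\ref{SharpeRatioMax}).
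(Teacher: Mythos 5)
Your proposal is correct, and it follows the paper's overall skeleton: substitute $\beta=\sqrt{1+2\,\text{SNR}/\lambda}$ into the three limits of Theorem~\ref{theoremPartial}, then analyse monotonicity. The interesting divergence is in how the two arguments handle the unimodality in $\lambda$, which is the delicate part. The paper computes the two-variable partial derivative $\frac{\partial \text{SR}_\infty}{\partial\lambda}$ of the closed form (\ref{AsymptoticSharpeRatioSNR}) directly, obtains a somewhat unwieldy quotient, and then observes that its sign is governed by $A(\text{SNR},\lambda)=3\lambda\bigl(1-\sqrt{1+2\,\text{SNR}/\lambda}\bigr)+2\,\text{SNR}$, which factorises as $\lambda(\beta-1)(\beta-2)$. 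You instead reparametrise by $\beta$ at fixed $\text{SNR}$, using $\lambda=\frac{2\,\text{SNR}}{\beta^2-1}$ to collapse the Sharpe ratio to $\sqrt{\text{SNR}}\,(\beta-1)^{1/2}(\beta+1)^{-3/2}$, and locate the critical point $\beta=2$ by logarithmic differentiation of a one-variable function; the only point of care, which you correctly flag, is that $\lambda\mapsto\beta$ is orientation-reversing at fixed $\text{SNR}$, so increase in $\beta$ becomes decrease in $\lambda$. Both routes hinge on the same fact (the threshold $\beta=2$, i.e.\ $\text{SNR}=\frac{3}{2}\lambda$), but yours trades the messy partial derivative plus the factorisation trick for a cleaner univariate computation, at the cost of having to track the direction of the change of variables. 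Similarly, your monotonicity in $\text{SNR}$ is obtained by composing increasing maps through $\beta$ rather than by the paper's explicit partial derivatives; both are sound, and your version makes it more transparent why the inequalities are strict.
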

\begin{proof}
Using Equation (\ref{SNR}) and Equation (\ref{BetaDef}), it follows that:
\begin{eqnarray*}
\beta=\sqrt{1+\frac{2\text{SNR}}{\lambda}}.
\end{eqnarray*}
Injecting this expression in Equation (\ref{AsymptoticExp}), we find:
\begin{eqnarray*}
\lim_{T \rightarrow \infty} \frac{\mathbb{E}\left[ \ln\left(  \frac{P_{T}}{P_{0}} \right)  \right] }{T}=L\left( \text{SNR},\lambda\right),
\end{eqnarray*}
where
\begin{eqnarray*}
L\left( \text{SNR},\lambda\right)=\frac{1}{2}\left(\text{SNR}+\lambda-\sqrt{\lambda\left(\lambda+2\text{SNR} \right) }\right).
\end{eqnarray*}
Since
\begin{eqnarray*}
\frac{\partial L\left( \text{SNR},\lambda\right)}{\partial\text{SNR}}=\frac{1}{2}\left(1-\frac{1}{\sqrt{1+\frac{2\text{SNR}}{\lambda}}} \right) \geq 0,
\end{eqnarray*}
the asymptotic expected logarithmic return is an increasing function of $\text{SNR}$. Moreover:
\begin{eqnarray*}
\frac{\partial L\left( \text{SNR},\lambda\right)}{\partial\lambda}=\frac{1}{2}\left(1-\frac{\lambda+\text{SNR}}{\sqrt{\lambda\left(\lambda +2\text{SNR}\right) }} \right) \leq 0,
\end{eqnarray*}
it follows that the asymptotic expected logarithmic return is a decreasing function of $\lambda$.
Moreover, using Equations (\ref{SNR}), (\ref{BetaDef}) and  (\ref{AsymptoticVariance}), Equation (\ref{AsymptoticVarSNR}) follows. 

Now, with Equations (\ref{SNR}), (\ref{BetaDef}) and  (\ref{AsymptoticSharpeRatio}), we find:
\begin{eqnarray*}
\lim_{T \rightarrow \infty} \text{SR}_T= \text{SR}_\infty\left(\text{SNR},\lambda \right),
\end{eqnarray*}
where
\begin{eqnarray*}
\text{SR}_\infty\left(\text{SNR},\lambda \right)=\left( \frac{\lambda}{2}\right) ^{3/2}\frac{\left(\sqrt{1+\frac{2\text{SNR}}{\lambda}}-1 \right)^2 }{\text{SNR}}.
\end{eqnarray*}
Since
\begin{eqnarray*}
\frac{\partial \text{SR}_\infty }{\partial\text{SNR}}=\frac{\lambda^{\frac{3}{2}}\left(\sqrt{1+\frac{2\text{SNR}}{\lambda}}-1 \right)^2 }{2\text{SNR}^2\sqrt{2\left(1+\frac{2\text{SNR}}{\lambda} \right) }}\geq 0,
\end{eqnarray*}
the asymptotic Sharpe ratio is an increasing function of $\text{SNR}$. Moreover:
\begin{eqnarray*}
\frac{\partial \text{SR}_\infty}{\partial\lambda}=\frac{\left(\sqrt{1+\frac{2\text{SNR}}{\lambda}}-1 \right) \left(3\lambda \left(1- \sqrt{1+\frac{2\text{SNR}}{\lambda}}\right) +2\text{SNR} \right)}{4\sqrt{2\lambda}\text{SNR}\sqrt{1+\frac{2\text{SNR}}{\lambda}}}.
\end{eqnarray*}
Then, the sign of $\frac{\partial \text{SR}_\infty}{\partial\lambda}$ is given by the sign of:
\begin{eqnarray*}
\text{A}\left(\text{SNR},\lambda \right)= \left(3\lambda \left(1- \sqrt{1+\frac{2\text{SNR}}{\lambda}}\right) +2\text{SNR} \right).
\end{eqnarray*}
Using $\beta=\sqrt{1+\frac{2\text{SNR}}{\lambda}}$, this expression can be factorised:
\begin{eqnarray*}
\text{A}\left(\text{SNR},\lambda \right)= \lambda\left(\beta-1 \right) \left(\beta-2 \right).
\end{eqnarray*}
Since $\beta \geq 1$, $\text{A}\left(\text{SNR},\lambda \right)$ is negative if and only if $\beta\leq 2$ (and positive if and only if $\beta \geq 2$), which is equivalent to the condition of Equation (\ref{DecreasingSharpeCondition}). Equation (\ref{SharpeRatioMax}) is obtained using $\text{SNR}=\frac{3}{2}\lambda$ in Equation (\ref{AsymptoticSharpeRatioSNR}). Note that $\text{SR}_\infty$ is always positive. Since $\text{SR}_\infty$ is an increasing function of $\lambda$ if  $\lambda<\frac{2}{3}\text{SNR}$ and a decreasing function after this point, the maximum value of this function is given by Equation (\ref{SharpeRatioMax}).
\end{proof}
\subsection{Impact of partial information on the optimal strategy} 
In order to measure the impact of the investor's inability to observe the trend on the optimal strategy performance, we introduce the partial information factor. This indicator represents the ratio  between the asymptotic Sharpe ratio of the optimal strategy with partial information and the asymptotic Sharpe ratio of the optimal strategy with complete information:
\begin{equation}\label{PIFDef}
\text{PIF}=\frac{\text{SR}_\infty}{\text{SR}_\infty^\text{o}},
\end{equation}
where $\text{SR}_\infty$ is the asymptotic Sharpe ratio of the optimal strategy with partial information, and $\text{SR}_\infty^\text{o}$ is the asymptotic Sharpe ratio of optimal strategy with full information. 
The following theorem gives the analytic form of this indicator.
\begin{theo}
The partial information factor is given by:
\begin{eqnarray}\label{PIF}
\text{PIF}=\left( \frac{\lambda}{\text{SNR}}\right) ^{3/2}\frac{\left(\sqrt{1+\frac{2\text{SNR}}{\lambda}}-1 \right)^2 }{\sqrt{2}},
\end{eqnarray}
where $\text{SNR}$ is the signal-to-noise-ratio (see Equation (\ref{SNR})). 

If $\text{SNR}<\frac{3}{2}\lambda$ (respectively, $\text{SNR}>\frac{3}{2}\lambda$): 
\begin{enumerate}
\item For a fixed parameter value $\text{SNR}$, this indicator is a decreasing function (respectively, an increasing function)  of  $\lambda$.
\item For a fixed parameter value $\lambda$, this indicator is an increasing function (respectively, a decreasing function) of $\text{SNR}$.
\end{enumerate}
Moreover:
\begin{equation}\label{PIFMAX}
\text{PIF} \leq \frac{2}{3^{3/2}},
\end{equation}
and this bound is attained for $\lambda=\frac{2}{3}\text{SNR}$.
\end{theo}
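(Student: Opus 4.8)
The plan is to establish the explicit formula (\ref{PIF}) first, then reduce the two-variable analysis to a single-variable one, and finally read off all the monotonicity statements and the bound (\ref{PIFMAX}) from this reduced form.

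First I would compute the ratio directly. Substituting the asymptotic Sharpe ratio under complete information $\text{SR}_\infty^\text{o}=\frac{\sqrt{\text{SNR}}}{2}$ from (\ref{SharpeRatioOmniscient}) and the asymptotic Sharpe ratio under partial information from (\ref{AsymptoticSharpeRatioSNR}) into the definition (\ref{PIFDef}), the factors of $2$ and the powers of $\text{SNR}$ combine to give exactly (\ref{PIF}); this is a routine algebraic simplification. The key step is then to observe that $\text{PIF}$ depends on $\text{SNR}$ and $\lambda$ only through the single dimensionless quantity $\beta=\sqrt{1+\frac{2\text{SNR}}{\lambda}}$ of (\ref{BetaDef}). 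Indeed, writing $\frac{2\text{SNR}}{\lambda}=\beta^2-1$, hence $\frac{\lambda}{\text{SNR}}=\frac{2}{\beta^2-1}$, and factoring $\beta^2-1=(\beta-1)(\beta+1)$, the expression (\ref{PIF}) collapses to
\begin{eqnarray*}
\text{PIF}=\frac{2\sqrt{\beta-1}}{\left(\beta+1\right)^{3/2}}.
\end{eqnarray*}

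This reduction turns the problem into elementary single-variable calculus on $\beta\in[1,\infty)$. Differentiating, the sign of $\frac{d\,\text{PIF}}{d\beta}$ is governed by the factor $(2-\beta)$, so $\text{PIF}$ is increasing on $[1,2]$ and decreasing on $[2,\infty)$, with a unique maximum at $\beta=2$ where its value is $\frac{2}{3^{3/2}}$; this proves (\ref{PIFMAX}). Translating back, $\beta=2$ is equivalent to $\frac{2\text{SNR}}{\lambda}=3$, i.e. $\lambda=\frac{2}{3}\text{SNR}$, which identifies the maximizer and matches the stated condition.

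Finally, I would obtain the two monotonicity statements by the chain rule, using that $\beta$ is an increasing function of $\text{SNR}$ (for fixed $\lambda$) and a decreasing function of $\lambda$ (for fixed $\text{SNR}$), together with the sign analysis above. The only point requiring care is matching the threshold: the condition $\text{SNR}<\frac{3}{2}\lambda$ of (\ref{DecreasingSharpeCondition}) is precisely $\beta<2$, the region where $\text{PIF}$ is increasing in $\beta$. Thus for $\beta<2$ an increase in $\lambda$ lowers $\beta$ and hence lowers $\text{PIF}$, while an increase in $\text{SNR}$ raises $\beta$ and hence raises $\text{PIF}$; the signs reverse for $\beta>2$, giving all four claims. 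I expect no real obstacle here: once the reduction to the single variable $\beta$ is made, every assertion of the theorem follows from the elementary behaviour of this one function, and the main risk is merely bookkeeping in the initial simplification of the powers in (\ref{PIF}).
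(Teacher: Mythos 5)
Your proposal is correct, and it organizes the argument differently from the paper. The paper obtains (\ref{PIF}) as the ratio of (\ref{AsymptoticSharpeRatioSNR}) to (\ref{SharpeRatioOmniscient}) just as you do, but then treats the two monotonicity claims separately: it computes the two-variable partial derivative $\frac{\partial \text{PIF}}{\partial \text{SNR}}$ explicitly and reads off its sign, while the dependence on $\lambda$ is imported from Corollary \ref{corOpt} (legitimate because $\text{SR}_\infty^{\text{o}}$ does not depend on $\lambda$, so $\text{PIF}$ inherits the $\lambda$-monotonicity of $\text{SR}_\infty$). Your reduction to the single dimensionless variable $\beta$, giving $\text{PIF}=\frac{2\sqrt{\beta-1}}{(\beta+1)^{3/2}}$ with derivative sign governed by $(2-\beta)$, is not in the paper and is arguably cleaner: it unifies both monotonicity statements, the threshold $\beta=2\Leftrightarrow \text{SNR}=\frac{3}{2}\lambda$, and the bound (\ref{PIFMAX}) (which the paper's proof leaves implicit, to be inferred from the maximum of $\text{SR}_\infty$ in (\ref{SharpeRatioMax})) into one elementary single-variable computation, at the cost of one extra algebraic observation. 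Your chain-rule bookkeeping ($\beta$ increasing in $\text{SNR}$, decreasing in $\lambda$) correctly reproduces all four sign claims.
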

\begin{proof}
This expression of the partial information factor is a consequence of Equations (\ref{SharpeRatioOmniscient}) and  (\ref{AsymptoticSharpeRatioSNR}). Moreover:
\begin{eqnarray*}
\frac{\partial \text{PIF}}{\partial\text{SNR}}=\frac{\sqrt{\lambda } \left(\sqrt{\frac{2 \text{SNR}}{\lambda }+1}-1\right) \left(3 \lambda  \left(\sqrt{\frac{2 \text{SNR}}{\lambda }+1}-1\right)-2 \text{SNR}\right)}{2 \text{SNR}^{5/2} \sqrt{\frac{4 \text{SNR}}{\lambda }+2}}.
\end{eqnarray*}
This expression is positive if and only if $\text{SNR}\leq\frac{3}{2}\lambda$. The dependency on the mean reversion speed $\lambda$ comes from Corollary \ref{corOpt}.
\end{proof}
\begin{remarque}
Equation (\ref{PIFMAX}) shows that in the best configuration (with $\lambda=\frac{2}{3}\text{SNR}$), the asymptotic Sharpe ratio of the optimal strategy with partial information is approximatively equal to $38.49\%$ of the asymptotic Sharpe ratio of the optimal strategy with complete information. 

Moreover, the intuition tells us that a high signal-to-noise ratio and a small trend mean reversion speed $\lambda$ involves a small impact of partial information on the optimal strategy performance (and then a high $\text{PIF}$). This is true if and only if $\text{SNR}\leq\frac{3}{2}\lambda$.
\end{remarque}
\section{Simulations}
In this section, numerical examples are computed in order to illustrate the analytical results of the previous sections. The figure \hyperref[Figu1]{1}  represents the asymptotic Sharpe ratio of the optimal strategy with full information as a function of the signal-to-noise ratio. If the signal-to-noise ratio is inferior to 1, which corresponds to a trend standard deviation inferior to the volatility of the risky asset, the asymptotic Sharpe ratio of the optimal strategy with complete information is inferior to 0.5. 

Now, suppose that $\lambda \in \left[1,252\right]$ and that the trend is an unobservable process. The figure \hyperref[Figu2]{2} represents the asymptotic Sharpe ratio of the optimal strategy with partial information as a function of the trend mean reversion speed $\lambda$ and of the signal-to-noise ratio. Since $\lambda \in \left[1,252\right]$ and SNR$<1$, Equation (\ref{DecreasingSharpeCondition}) is satisfied and this Sharpe ratio is an increasing function of SNR and a decreasing function of $\lambda$. Moreover, the maximal value is inferior to 0.2. We also observe that, even with a high signal-to-noise ratio, a high mean reversion parameter $\lambda$ leads to a small Sharpe ratio.  

The figure \hyperref[Figu3]{3} represents the partial information factor, which corresponds to the ratio between the asymptotic Sharpe ratios of the optimal strategy with partial  and full  information (see Equation (\ref{PIFDef})). Using Equation (\ref{PIFMAX}), this indicator is bound by $\frac{2}{3^{3/2}}$. Since $\text{SNR}<\frac{3}{2}\lambda$, this indicator is a decreasing function of $\lambda$ and an increasing function of SNR. Even with a high signal-to-noise ratio, a high mean reversion parameter $\lambda$ leads
to a negligible performance of the optimal strategy with partial information compared to the case with full information.

The figures \hyperref[Figu4]{4} and \hyperref[Figu5]{5} represents the asymptotic Sharpe ratio of the optimal strategy with partial information and the partial information factor as functions of the signal-to-noise ratio and of $\lambda$ with $\lambda \in \left[0,2\right]$. Theses figures illustrate that, if  
$\text{SNR}>\frac{3}{2}\lambda$, these quantities are increasing functions of the trend mean reversion  speed $\lambda$ (and the partial information factor is also a decreasing function of the signal-to-noise ratio).
\begin{figure}[H]
\begin{center}
   \includegraphics[totalheight=6cm]{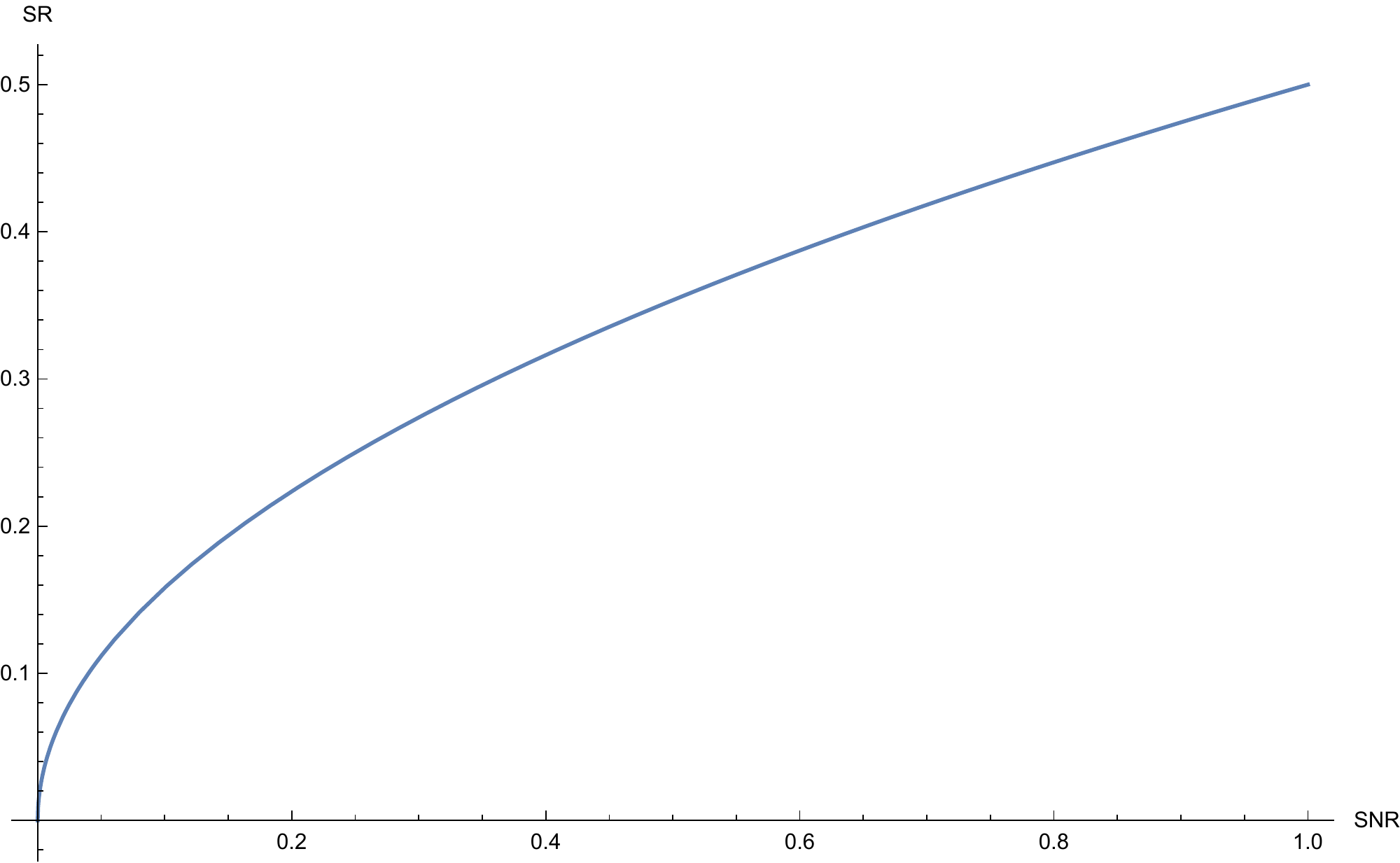}   
  \caption{Asymptotic Sharpe ratio of the optimal strategy with complete information as a function of the signal-to-noise ratio.}
\end{center}\label{Figu1}
\end{figure}
\begin{figure}[H]
\begin{center}
   \includegraphics[totalheight=7cm]{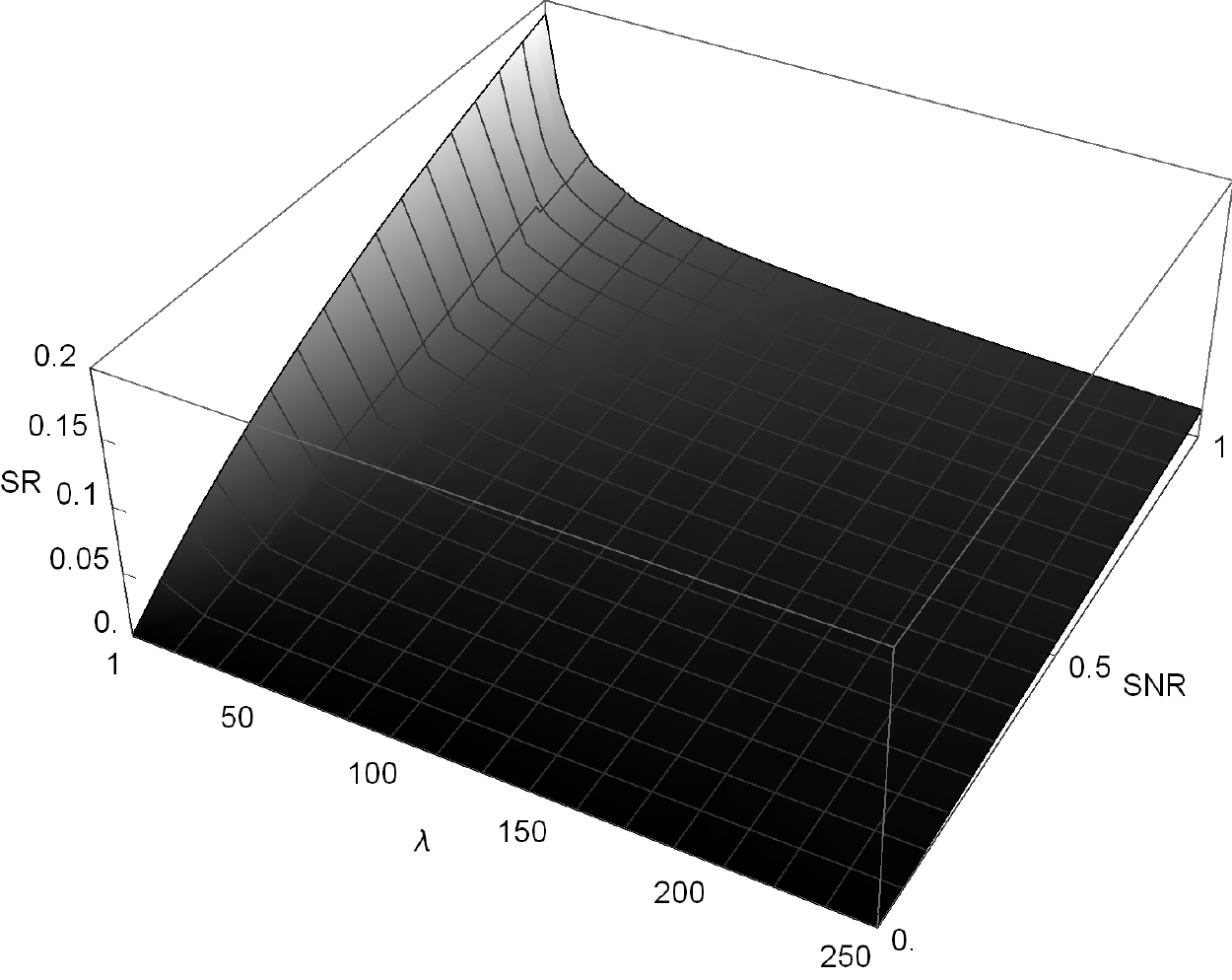}   
  \caption{Asymptotic Sharpe ratio of the optimal strategy with partial information as a function of the trend mean reversion speed $\lambda$ and of the signal-to-noise ratio.}
\end{center}\label{Figu2}
\end{figure}
\begin{figure}[H]
\begin{center}
   \includegraphics[totalheight=7cm]{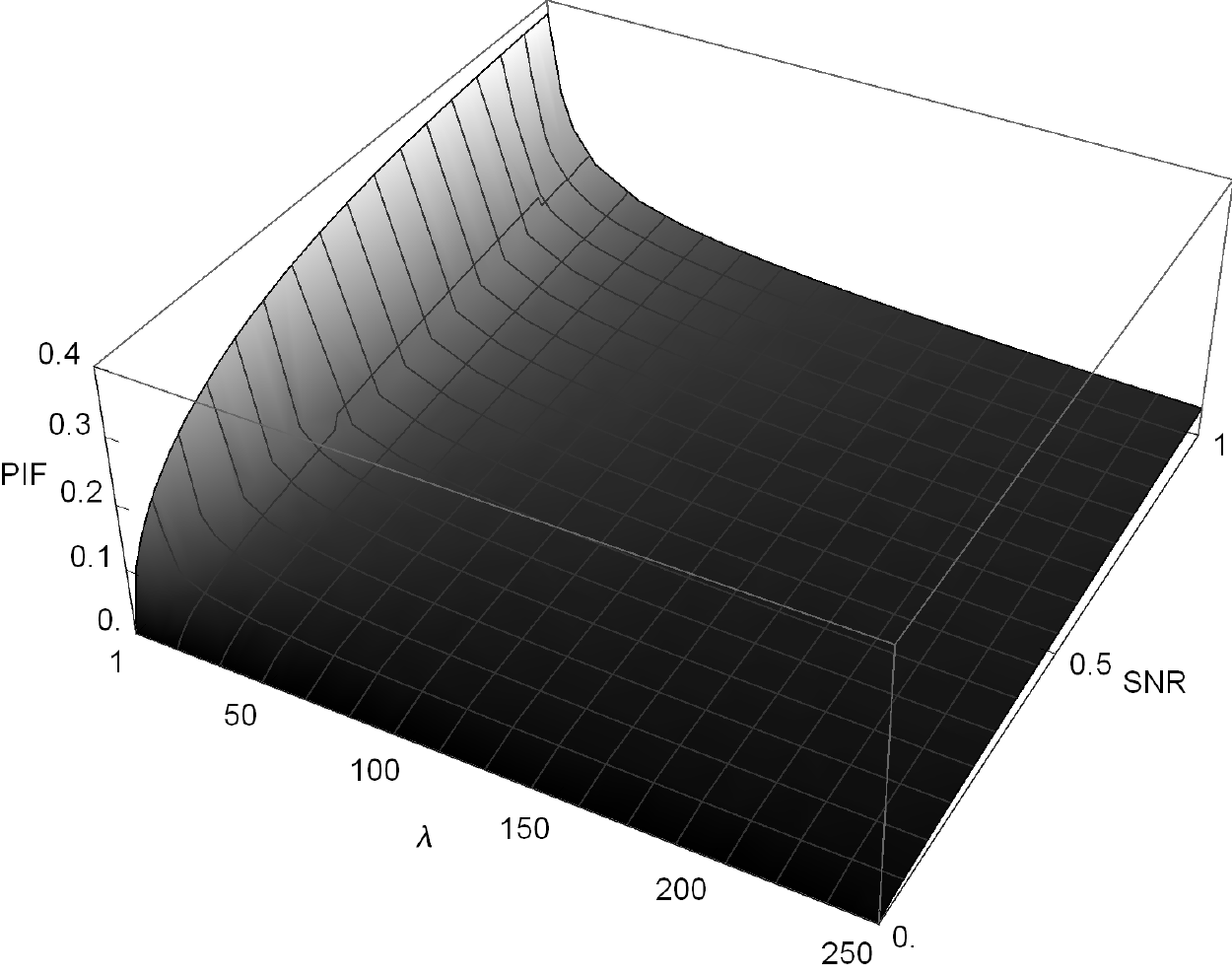}   
  \caption{Partial information factor as a function of the trend mean reversion speed $\lambda$ and of the signal-to-noise ratio.}
\end{center}\label{Figu3}
\end{figure}

\begin{figure}[H]
\begin{center}
   \includegraphics[totalheight=5.5cm]{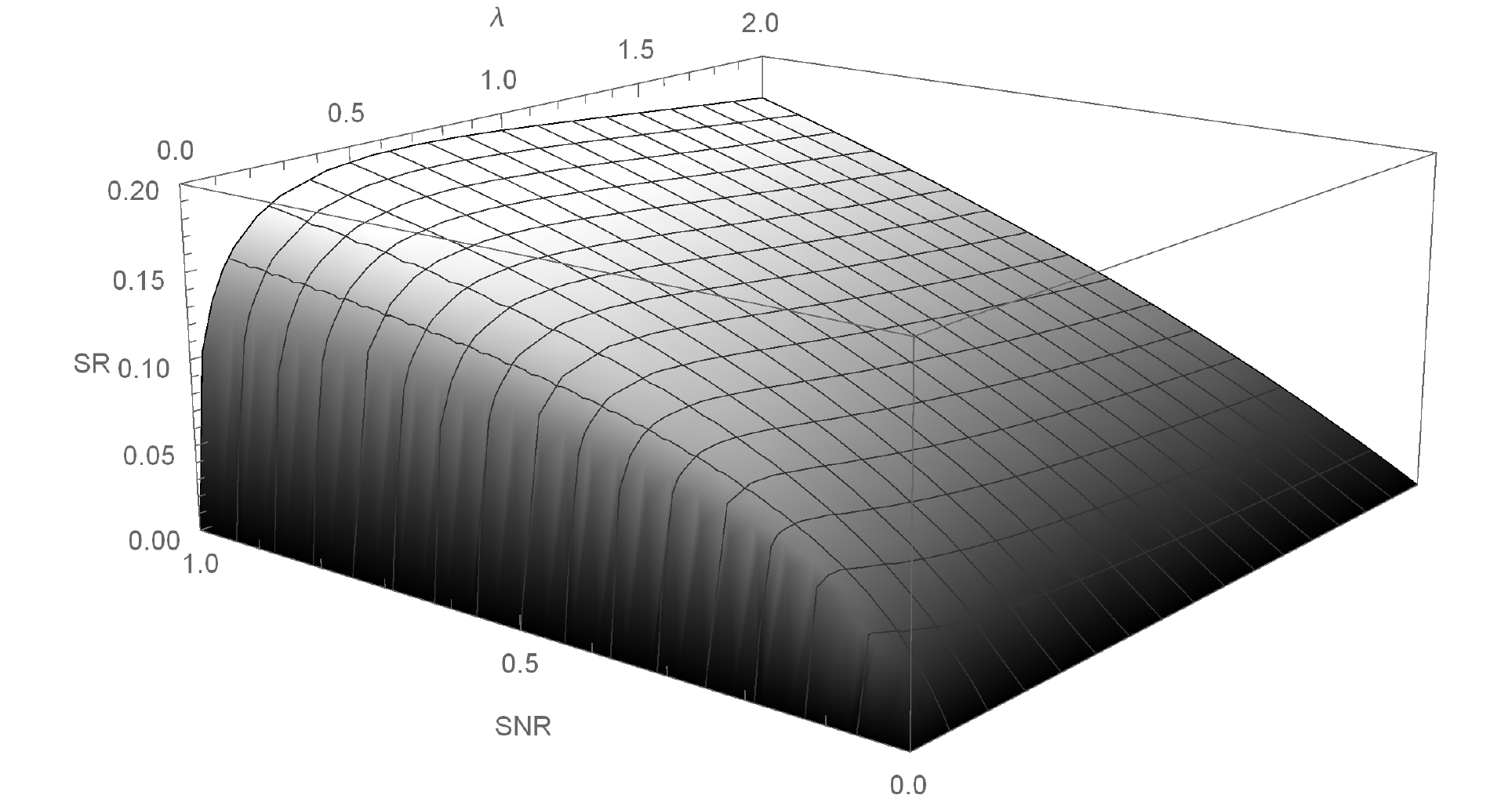}   
  \caption{Asymptotic Sharpe ratio of the optimal strategy with partial information as a function of the trend mean reversion speed $\lambda$ and of the signal-to-noise ratio with $\lambda \in \left[0,2\right]$.}
\end{center}\label{Figu4}
\end{figure}

\begin{figure}[H]
\begin{center}
   \includegraphics[totalheight=5.5cm]{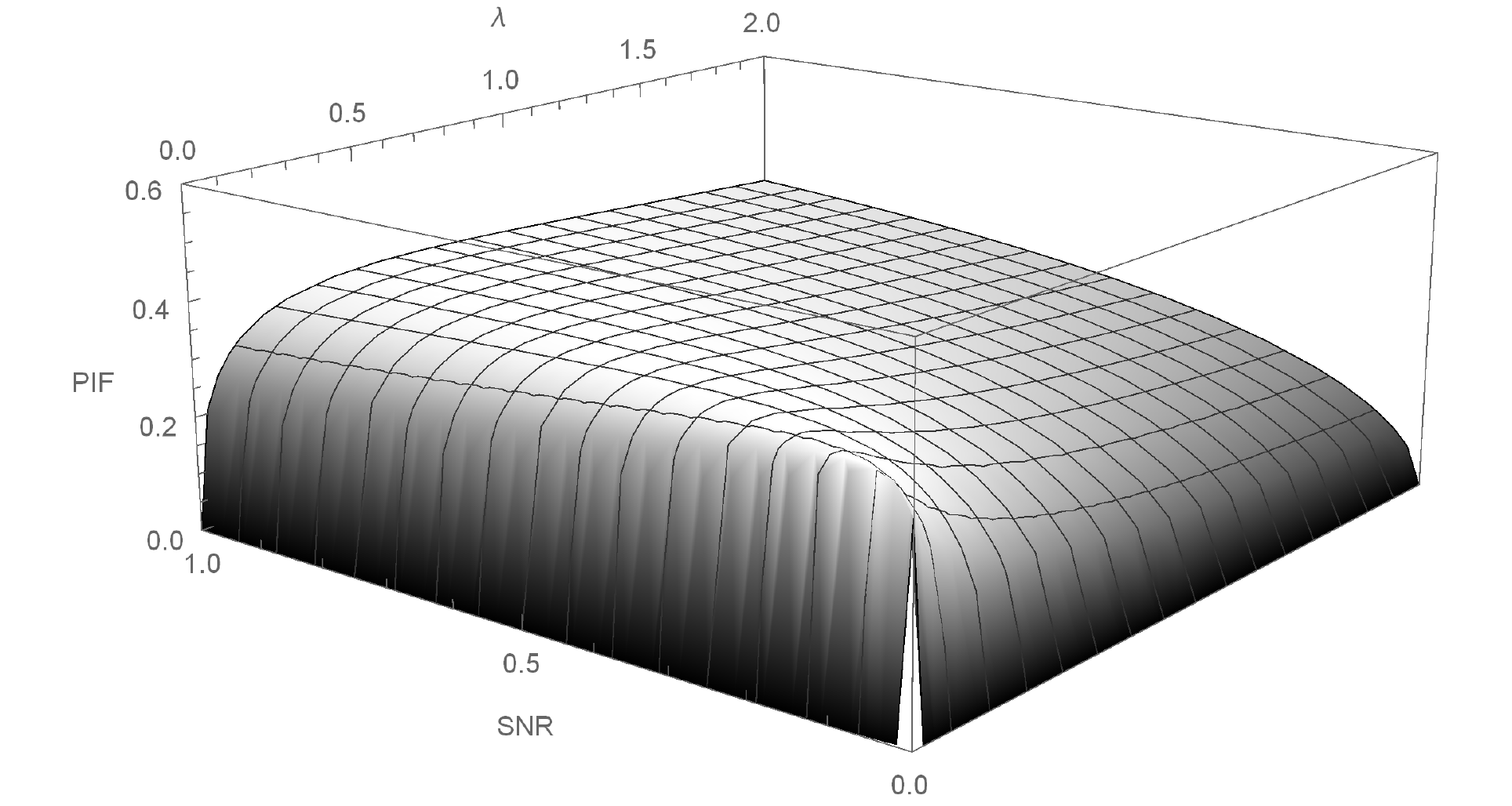}   
  \caption{Partial information factor as a function of the trend mean reversion speed $\lambda$ and of the signal-to-noise ratio with $\lambda \in \left[0,2\right]$.}
\end{center}\label{Figu5}
\end{figure}
\section{Conclusion}
The present work quantifies the loss of performance in the optimal trading strategy due to partial information with a model based on an unobserved mean-reverting diffusion. 

If the trend is observable, we show that the asymptotic Sharpe ratio of the optimal strategy is only an increasing function of the signal-to-noise ratio. 

Under partial information, this asymptotic Sharpe ratio becomes a function of the signal-to-noise ratio and of the trend mean reversion speed. Even if the asymptotic Sharpe ratio is also an increasing function of the signal-to-noise ratio, we find that the dependency on the trend mean reversion speed is not monotonic. Indeed, this is an unimodal (increasing then decreasing) function of the trend mean reversion speed.

We also show that the ratio between the asymptotic Sharpe ratio of the optimal strategy with partial information and the asymptotic Sharpe ratio of the  optimal strategy with complete information is bounded by a threshold equal to $\frac{2}{3^{3/2}}$. Given this result, we surely conclude that the impact of partial information on the optimal strategy is not negligible.

Moreover, the simulations show that even with a high signal-to-noise ratio, a high trend mean reversion speed leads to a negligible performance of the optimal strategy under partial information compared to the performance of the  optimal strategy with complete information.

\newpage
\section*{Appendix A: Proof of Proposition \ref{ObservableFramework}}
\label{sec::ProofObservableFramework}
\begin{proof}
Let $K$ be a $\left( \mathbb{P} ,\left\lbrace  \mathcal{F}_{t} \right\rbrace\right) $ martingale defined by:
\begin{eqnarray*}
\frac{dK_t}{K_t}=\frac{-\mu_t}{\sigma_{S}}dW_t^{S},
\end{eqnarray*}
and the probability measure $\widetilde{\mathbb{P}}$ defined by:
\begin{eqnarray*}
\frac{d\widetilde{\mathbb{P}}}{d\mathbb{P}}=K_T.
\end{eqnarray*}
With the Girsanov's theorem, it follows that the process:
\begin{eqnarray*}
\widetilde{W}_t^{S}=W_t^{S} + \int_{0}^{t}\frac{\mu_s}{\sigma_{S}} ds,
\end{eqnarray*}
is a $\left( \widetilde{\mathbb{P}} ,\left\lbrace  \mathcal{F}_{t} \right\rbrace\right) $ Wiener process. Note also that:
\begin{eqnarray*}
\frac{dS_t}{S_t}=\sigma d \widetilde{W}_t^{S}.
\end{eqnarray*}
Now, introduce the process $N$, defined by:
\begin{eqnarray*}
N_t=\widetilde{W}_t^{S}- \int_{0}^{t}\frac{E\left[  \mu_s | \mathcal{F}^S_s\right] }{\sigma_{S}} ds,
\end{eqnarray*}
as $\widetilde{W}_t^{S}$ and $\mathbb{E}\left[ \mu_{t} | \mathbb{F}^{S}_{t} \right]$ are $\left\lbrace \mathcal{F}^{S}_{t} \right\rbrace$ measurable, $N_t$ is $\left\lbrace \mathcal{F}^{S}_{t} \right\rbrace$ measurable. The process $N$ is also integrable.
Let $\tau$ be a bounded stopping time. we have
\begin{equation*}
       \mathbb{E}\left[ N_{\tau} \right]=\mathbb{E}\left[ N_{0} \right]=0.
\end{equation*}
Then, $N$ is a continuous martingale and $N_{0}=0$. Note that
\begin{equation*}
       d \left\langle N \right\rangle _{t}=dt
\end{equation*}
Using Levy's criteria, the process $N$ is a $\left( \mathbb{P} ,\left\lbrace  \mathcal{F}_{t}^{S} \right\rbrace\right)$ Wiener process.
\end{proof}

\newpage
\section*{Appendix B: Auto-covariance function of the square steady state Kalman filter}
\label{sec::A}
the following lemma gives the auto-covariance function of the process $\hat{\mu}^2$:
\begin{lemme}\label{LemmaAutocovKalman}
Consider the process $\hat{\mu}$ defined in Equation (\ref{ContinuousEstimate}). Its auto-covariance function is given by:
\begin{eqnarray}\label{AutocovKalman}
\mathbb{C}\text{ov}\left[\hat{\mu}_s^2,\hat{\mu}_t^2 \right]=\frac{\lambda^2\sigma_S^4\left(\beta-1 \right)^4}{2} e^{-2\lambda t}\left(e^{2\lambda s}+e^{-2\lambda s}-2 \right)  ,
\end{eqnarray}
with $0\leq s\leq t$.
\end{lemme}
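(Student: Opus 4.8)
The plan is to exploit the fact that $\hat{\mu}$ is a zero-mean Gaussian (Ornstein--Uhlenbeck) process, so that the fourth-order moments of the pair $(\hat{\mu}_s,\hat{\mu}_t)$ collapse, via Isserlis' theorem, onto the second-order moments already computed in the proof of Theorem \ref{theoremPartial}. This is the same tool used there, so it fits naturally.

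First I would write down the explicit solution of the linear SDE (\ref{SDETrend}), namely $\hat{\mu}_t=\lambda\sigma_S(\beta-1)\int_0^t e^{-\lambda(t-u)}\,dN_u$, which makes manifest that $\hat{\mu}$ is centered and Gaussian and that, for $0\le s\le t$, It\^o's isometry gives the cross-covariance
\[
\mathbb{C}\text{ov}\left[\hat{\mu}_s,\hat{\mu}_t\right]=\lambda^2\sigma_S^2(\beta-1)^2\int_0^s e^{-\lambda(t-u)}e^{-\lambda(s-u)}\,du=\frac{\lambda\sigma_S^2(\beta-1)^2}{2}\left(e^{-\lambda(t-s)}-e^{-\lambda(t+s)}\right).
\]
Setting $s=t$ recovers the variance $\mathbb{V}\text{ar}[\hat{\mu}_t]=(\lambda\sigma_S(\beta-1))^2\frac{1-e^{-2\lambda t}}{2\lambda}$ quoted earlier, which is a convenient consistency check.

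Second, since $(\hat{\mu}_s,\hat{\mu}_t)$ is a centered Gaussian vector, Isserlis' theorem yields $\mathbb{E}[\hat{\mu}_s^2\hat{\mu}_t^2]=\mathbb{E}[\hat{\mu}_s^2]\,\mathbb{E}[\hat{\mu}_t^2]+2\left(\mathbb{E}[\hat{\mu}_s\hat{\mu}_t]\right)^2$, whence the key identity
\[
\mathbb{C}\text{ov}\left[\hat{\mu}_s^2,\hat{\mu}_t^2\right]=2\left(\mathbb{C}\text{ov}\left[\hat{\mu}_s,\hat{\mu}_t\right]\right)^2.
\]
Substituting the cross-covariance from the first step, expanding $\left(e^{-\lambda(t-s)}-e^{-\lambda(t+s)}\right)^2$, and factoring out $e^{-2\lambda t}$ produces $e^{-2\lambda t}\left(e^{2\lambda s}+e^{-2\lambda s}-2\right)$, giving exactly the claimed expression (\ref{AutocovKalman}).

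There is no deep obstacle: the result is a direct consequence of Gaussianity together with the It\^o isometry. The one point requiring care is the initial condition $\hat{\mu}_0=0$, which makes $\hat{\mu}$ \emph{non-stationary}, so that the cross-covariance retains the extra $e^{-\lambda(t+s)}$ term rather than reducing to the purely stationary $\frac{\lambda\sigma_S^2(\beta-1)^2}{2}e^{-\lambda(t-s)}$; discarding it would yield the wrong answer. I would therefore keep the lower integration limit at $0$ throughout and double-check the algebraic expansion of the square, since that is where an arithmetic slip is most likely to occur.
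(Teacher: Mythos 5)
Your proof is correct, and it reaches (\ref{AutocovKalman}) by a route that differs from the paper's in its key technical step. The paper represents $\hat{\mu}_s=e^{-\lambda s}\lambda\sigma_S(\beta-1)B_{f(s)}$ via the time change $f(s)=\frac{e^{2\lambda s}-1}{2\lambda}$ and then computes $\mathbb{E}\bigl[B_{f(s)}^2B_{f(t)}^2\bigr]$ by hand, conditioning on $\mathcal{F}^B_{s}$ and using the martingale property of $\int B_u\,dB_u$, arriving at $\mathbb{C}\text{ov}\bigl[B_{f(s)}^2,B_{f(t)}^2\bigr]=2f(s)^2$. You instead work directly with the stochastic-integral solution of (\ref{SDETrend}), obtain $\mathbb{C}\text{ov}\left[\hat{\mu}_s,\hat{\mu}_t\right]=\frac{\lambda\sigma_S^2(\beta-1)^2}{2}\bigl(e^{-\lambda(t-s)}-e^{-\lambda(t+s)}\bigr)$ by It\^o's isometry, and invoke Isserlis' theorem in the form $\mathbb{C}\text{ov}\left[\hat{\mu}_s^2,\hat{\mu}_t^2\right]=2\left(\mathbb{C}\text{ov}\left[\hat{\mu}_s,\hat{\mu}_t\right]\right)^2$; the final algebraic identity $\bigl(e^{-\lambda(t-s)}-e^{-\lambda(t+s)}\bigr)^2=e^{-2\lambda t}\bigl(e^{2\lambda s}+e^{-2\lambda s}-2\bigr)$ checks out. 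Your version is arguably cleaner: it avoids the time-change bookkeeping, reuses the Isserlis tool the paper already employs in the proof of Theorem \ref{theoremOmniscient}, and makes the structural fact $\mathbb{C}\text{ov}[X^2,Y^2]=2\,\mathbb{C}\text{ov}[X,Y]^2$ for centered Gaussians explicit, whereas the paper's computation effectively rederives that identity in the special case of Brownian motion. The paper's approach buys a self-contained moment calculation that does not appeal to Isserlis at this point. Your remark about the non-stationary initial condition $\hat{\mu}_0=0$ is apt and is exactly where the $e^{-\lambda(t+s)}$ term (equivalently the paper's $f(s)$ rather than a stationary variance) comes from.
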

\begin{proof}
Since $\hat{\mu}$ is a centred Ornstein Uhlenbeck process, there exists a Brownian motion $B$ such that, for all $s \in \mathbb{R}_+$:
\begin{eqnarray*}
\hat{\mu}_s=e^{-\lambda s}\lambda\sigma_S\left(\beta-1 \right)B_{f\left(s \right) }, 
\end{eqnarray*}
where $f\left(s \right)=\frac{e^{2\lambda s}-1}{2\lambda}$ is a time change. Then, for all $s,t$ such that $0\leq s\leq t$, we have:
\begin{eqnarray*}
\mathbb{C}\text{ov}\left[\hat{\mu}_s^2,\hat{\mu}_t^2 \right]=e^{-2\lambda \left( t+s\right)  }\lambda^4\sigma_S^4\left(\beta-1 \right)^4\mathbb{C}\text{ov}\left[B_{f\left(s \right)}^2,B_{f\left(t \right)}^2\right].
\end{eqnarray*}
Since $B$ is a Wiener process:
\begin{eqnarray*}
\mathbb{E}\left[B_{f\left(s \right)}^2\right]=f\left(s \right).
\end{eqnarray*}
Let $\left\lbrace  \mathcal{F}^B_{t} \right\rbrace$ be the filtration generated by the process $B$. So:
\begin{eqnarray*}
\mathbb{E}\left[B_{f\left(s \right)}^2,B_{f\left(t \right)}^2\right]&=&\mathbb{E}\left[B_{f\left(s \right)}^2 \mathbb{E}\left[B_{f\left(t \right)}^2 | \mathcal{F}^B_{s}\right] \right]\\
&=& \scriptstyle \mathbb{E}\left[B_{f\left(s \right)}^2 \mathbb{E}\left[\left( B_{f\left(s \right)}^2+2\int_{f\left(s \right)}^{f\left(t \right)}B_u dB_u + f\left(t \right)-f\left(s \right)\right)  | \mathcal{F}^B_{s}\right] \right]\\
&=&\mathbb{E}\left[B_{f\left(s \right)}^2 \left(B_{f\left(s \right)}^2 + f\left(t \right)-f\left(s \right) \right) \right]\\
&=& 3f\left(s \right)^2+\left(f\left(t \right)-f\left(s \right) \right) f\left(s \right).
\end{eqnarray*}
Then 
\begin{eqnarray*}
\mathbb{E}\left[B_{f\left(s \right)}^2,B_{f\left(t \right)}^2\right]=2f\left(s \right)^2+f\left(t \right)f\left(s \right),
\end{eqnarray*}
and Equation (\ref{AutocovKalman}) follows using:
\begin{eqnarray*}
\mathbb{C}\text{ov}\left[B_{f\left(s \right)}^2,B_{f\left(t \right)}^2\right] = \mathbb{E}\left[B_{f\left(s \right)}^2,B_{f\left(t \right)}^2\right]-\mathbb{E}\left[B_{f\left(s \right)}^2\right]\mathbb{E}\left[B_{f\left(t \right)}^2\right].
\end{eqnarray*}
\end{proof}
\newpage
\vspace{60pt}
\bibliographystyle{alpha}
\bibliography{bibli}
\end{document}